\setlist[enumerate]{topsep=0pt,itemsep=-1ex,partopsep=1ex,parsep=1ex}%
\tikzstyle{entity} =
\newcommand\bda[1]{\ifthenelse{\boolean{bda}}{\ #1\ }{}}%
\newcommand\notbda[1]{\ifthenelse{\not \boolean{bda}}{\ #1\ }{}}%
\title{From Task Tuning to Task Assignment in Privacy-Preserving Crowdsourcing Platforms}
\author{Joris Duguépéroux, Tristan Allard}
\institute{Univ Rennes, CNRS, IRISA\\ Rennes, France \\ \email{\{firstname.name\}@irisa.fr}}
\begin{document}

\maketitle

\begin{abstract}
  Specialized worker profiles of crowdsourcing platforms may contain a
  large amount of identifying and possibly sensitive personal
  information (\emph{e.g.,} personal preferences, skills, available
  slots, available devices) raising strong privacy concerns. This led
  to the design of privacy-preserving crowdsourcing platforms, that
  aim at enabling efficient crowdsourcing processes while providing
  strong privacy guarantees even when the platform is not fully
  trusted. In this paper, we propose two contributions. First,
  we propose the \texttt{PKD} algorithm with
  the goal of supporting a large variety of aggregate usages of worker
  profiles within a privacy-preserving crowdsourcing platform. The
  \texttt{PKD}
  algorithm
  combines together homomorphic encryption and differential privacy
  for computing (perturbed) partitions of the multi-dimensional space
  of skills of the actual population of workers and a (perturbed)
  \texttt{COUNT} of workers per partition. Second, we propose to
  benefit from recent progresses in Private Information Retrieval
  techniques in order to design a solution to task assignment that is
  both private and affordable. We perform an in-depth study of the
  problem of using PIR techniques for proposing tasks to workers, show
  that it is NP-Hard, and come up with the \texttt{PKD PIR Packing}
  heuristic that groups tasks together according to the partitioning
  output by the \texttt{PKD} algorithm.
  In a nutshell, we design the
  \texttt{PKD} algorithm and the \texttt{PKD PIR Packing} heuristic, we prove formally their security against
  \emph{honest-but-curious} workers and/or platform, we analyze their
  complexities, and we demonstrate their quality and affordability in
  real-life scenarios through an extensive experimental evaluation
  performed over both synthetic and realistic datasets.
\end{abstract}

\section{Introduction}
\label{sec:intro}%

Crowdsourcing platforms are online intermediates between
\emph{requesters} and \emph{workers}. The former have \emph{tasks} to
propose to the latter, while the latter have \emph{profiles}
(\emph{e.g.,} skills, devices, experience, availabilities) to propose
to the former. Crowdsourcing platforms have grown in diversity,
covering application domains ranging from
micro-tasks\footnote{\url{https://www.mturk.com/}} or
home-cleaning\footnote{\url{https://www.handy.com/}} to collaborative
engineering\footnote{\url{https://www.kicklox.com/}} or specialized
software team design\footnote{\url{https://tara.ai/}}. %

The efficiency of crowdsourcing platforms especially relies on the
wealth of information available in the profiles of registered
workers. Depending on the platform, a profile may indeed contain an
arbitrary amount of information: professional or personal skills,
daily availabilities, minimum wages, diplomas, professional
experiences, centers of interest and personal preferences, devices
owned and available, \emph{etc.} %
This holds especially for platforms dedicated to specialized tasks
that require strongly qualified workers\footnote{We adopt in this
  paper a broad definition of \emph{crowdsourcing}, including in
  particular freelancing platforms (similarly to
  \cite{zebralancer}).}. But even micro-tasks platforms may maintain
detailed worker profiles (see, \emph{e.g.,} the \emph{qualification}
system of \emph{Amazon Mechanical Turk} that maintains so-called
\emph{premium
  qualifications}\footnote{\url{https://requester.mturk.com/pricing}}
- \emph{i.e.,} sociodemographic information such as \emph{age range},
\emph{gender}, \emph{employment}, \emph{marital status}, \emph{etc.}
- in the profiles of workers willing to participate to surveys). %
The availability of such detailed worker profiles is of utmost
importance to both requesters and platforms because it enables:
\begin{description}
\item [Primary usages of worker profiles:] to target the specific set of workers
  relevant for a given task (through \emph{e.g.,} elaborate task
  assignment algorithms~\cite{Mavridis:2016:UHS:2872427.2883070}).
\item [Secondary usages of worker profiles:] to describe the
  population of workers available, often through \texttt{COUNT}
  aggregates, in order \emph{e.g.,} to promote the platform by
  ensuring requesters that workers relevant for their tasks are
  registered on the platform\footnote{See for example the Kicklox
    search form (\url{https://www.kicklox.com/en/}) that inputs a list
    of keywords (typically skills) and displays the corresponding
    number of workers available.}, or to participate to the task
  design by letting requesters fine-tune the tasks according to the
  actual population of workers (\emph{e.g.,} setting wages according
  to the rarity of the skills required, adapting slightly the
  requirements according to the skills available). %
\end{description}
Both primary and secondary usages are complementary and usually
supported by today's crowdsourcing platforms, in particular by
platforms dedicated to highly skilled tasks and workers\footnote{See
  for example, Kicklox (\url{https://www.kicklox.com/en/}) or Tara
  (\url{https://tara.ai/}). The secondary usage consisting in
  promoting the platform is sometimes performed through a public
  access to detailed parts of worker profiles (\emph{e.g.,} Malt
  (\url{https://www.malt.com/}), 404works
  (\url{https://www.404works.com/en/freelancers})).}.



However, the downside of fine-grained worker profiles is that detailed
information related to personal skills can be highly
identifying %
(\emph{e.g.,} typically a unique combination of
location/skills/centers of interest) or sensitive (\emph{e.g.,} costly
devices or high minimum wages may correspond to a wealthy individual,
various personal traits may be inferred from centers of
interest\footnote{See, \emph{e.g.,}
  \url{http://applymagicsauce.com/about-us}}). %
Recent privacy scandals have shown that crowdsourcing platforms are
not immune to negligences or misbehaviours. %
Well-known examples include cases where personally identifiable
information of workers is trivially exposed
online~\cite{LeaseHullmanBighamEtAl2013} or where precise geolocations
are illegitimately accessed\footnote{For example, internal emails that
  were leaked from Deliveroo indicate that the geolocation system of
  Deliveroo was used internally for identifying the riders that
  participated to strikes against the platform.\\
  \url{https://www.lemonde.fr/culture/article/2019/09/24/television-cash-investigation-a-la-rencontre-des-nouveaux-proletaires-du-web_6012758_3246.html}}$^,$\footnote{In
  another example, an Uber executive claimed having tracked a
  journalist using the company geolocation system. \\
  \url{https://tinyurl.com/y4cdvw45}%
}. %
It is noticeable that workers nevertheless expect platforms to secure
their data and to protect their privacy in order to lower the privacy
threats they face~\cite{crowdsurvey}. Moreover, in a legal context
where laws firmly require businesses and public organizations to
safeguard the privacy of individuals (such as the European
GDPR\footnote{\url{https://eur-lex.europa.eu/eli/reg/2016/679/oj}} or
the California Consumer Privacy
Act\footnote{\url{https://www.caprivacy.org/}}), legal compliance is
also a strong incentive for platforms for designing and
implementing sound privacy-preserving crowdsourcing processes. Ethics
in general, and privacy in particular, are indeed clearly identified
as key issues for next generation \emph{future of work}
platforms~\cite{fow}. Most related privacy-preserving works have
focused on the primary usage of worker profiles, \emph{i.e.,} the
task-assignment problem 
(\emph{e.g.,} based on additively-homomorphic
encryption~\cite{kajino2015phd} or on local differential
privacy~\cite{beziaud}\footnote{Note that limiting the information
  disclosed to the platform (\emph{i.e.,} perturbed information about
  worker profiles) relieves platforms from the costly task of
  handling personal data. The European GDPR indeed explicitely
  excludes anonymized data from its scope (see Article 4, Recital 26
  \url{https://gdpr-info.eu/recitals/no-26/}).}).  %

Our goal in this paper is twofold: (1) \emph{consider both primary and
  secondary usages as first-class citizens} by proposing a
privacy-preserving solution for computing multi-dimensional
\texttt{COUNTs} over worker profiles and a task assignment algorithm
based on recent affordable Private Information Retrieval (PIR)
techniques, and (2) \emph{integrate well with other privacy-preserving
  algorithms possibly executed by a platform} without jeopardizing the
privacy guarantees by requiring our privacy model to be composable
both with usual computational cryptographic guarantees provided by
real-life encryption schemes and with classical differential privacy
guarantees as well. %



%
%
%
%



The two problems are not trivial. First, we focus on secondary
usages. %
The problem of computing multi-dimensional \texttt{COUNTs} over
distributed worker profiles in a privacy-preserving manner is not
trivial. Interactive approaches - that issue a privacy-preserving
\texttt{COUNT} query over the set of workers each time needed
(\emph{e.g.,} a requester estimates the number of workers qualified
for a given task) - are inadequate because the number of queries would
be unbounded. This would lead to out-of-control information disclosure
through the sequence of \texttt{COUNTs}
computed~\cite{DBLP:journals/corr/abs-1810-05692,DBLP:conf/pods/DinurN03}. Non-interactive
approaches are a promising avenue because they compute, once for all
and in a privacy-preserving manner, the static data structure(s) which
are then exported and queried by the untrusted parties (\emph{e.g.,}
platform, requesters) without any limit on the number of queries. More
precisely, on the one hand, hierarchies of histograms are well-known
data structures that support \texttt{COUNT} queries and that cope well
with the stringent privacy guarantees of differential
privacy~\cite{Qardaji:2013:UHM:2556549.2556576}. However, they do not
cope well with more than a few
dimensions~\cite{Qardaji:2013:UHM:2556549.2556576}, whereas a worker
profile may contain more skills (\emph{e.g.,} a dozen), and they
require a trusted centralized platform. On the other hand,
privacy-preserving spatial
decompositions~\cite{psd,Zhang:2016:PDP:2882903.2882928}
are more tolerant to a higher number of dimensions but require as well
a trusted centralized platform. Second, algorithms for assigning tasks
to workers while providing sound privacy guarantees have been proposed
as alternatives against naive \emph{spamming approaches} - where all
tasks are sent to all workers. However they are either based (1) on
perturbation only (e.g., \cite{beziaud}) and suffer from a severe drop
in quality or (2) on encryption only (e.g., \cite{kajino2015phd}) but
they do not reach realistic performances, or (3) they focus on the
specific context of spatial crowdsourcing and geolocation data (e.g.,
\cite{to2018privacy}).

\begin{figure}[htbp]
  \centering
  \resizebox{1\linewidth}{!}{\begin{tikzpicture}[every text node part/.style={align=center}]


\node[builder,minimum size=.5cm] (W1) at (-3.9,.7) {};
\node[builder,minimum size=.5cm] (W2) at (-2.7,.7) {Workers};
\node[builder,minimum size=.5cm] (W3) at (-1.5,.7) {};

\node[police,minimum size=.5cm] (PF) at (-2.7,3.5) {Platform};
\node[] (PF_pos_1) at (-3,2.9) {};
\node [] (PF_pos_2) at (-2.7,2.9) {};
\node [] (PF_pos_3) at (-2.4,2.9) {};

\draw [->] (W1) to [bend left=15] node [bend right] {} (PF_pos_1); %
\draw [<-] (W1) to [bend right=15] node [bend right] {} (PF_pos_1); %
\draw [->] (W2) to [bend left=15] node [bend right] {} (PF_pos_2); %
\draw [<-] (W2) to [bend right=15] node [bend right] {} (PF_pos_2); %
\draw [->] (W3) to [bend left=15] node [bend right] {} (PF_pos_3); %
\draw [<-] (W3) to [bend right=15] node [bend right] {} (PF_pos_3); %

\draw [fill=black, white] (-4,1.6) rectangle (0,2.4) node[pos=.5] {The PKD Algorithm};
\node [] (PKD) at (-2.8,2) {Privacy preserving\\exchanges};

\node [] (PKD) at (-2.7,-1.1) {\textbf{Computation:} \\ \textbf{P}rivacy Preserving \\ \textbf{KD} Tree Algorithm};

\path [->, very thick] (-1.4,1.7) edge (-0.6,1.7); %


\node[police,minimum size=.5cm] (PF) at (1.5,3.8) {Platform};

\node [] (origin) at (-0.2,-0.2) {$0$};
\node [] (Sk1_1) at (3.2,-0.2) {$1$};
\node [] (Sk2_1) at (-0.2,3.2) {$1$};
\node [] (Sk1) at (1.5,-0.2) {\emph{$C++$}};
\node [rotate=90] (Sk1) at (-0.2,1.5) {\emph{Network}};

\draw (0,0) rectangle (3,3) node[pos=.5] {};
\path [-] (0,1.5) edge (3,1.5);
	\path [-] (1,1.5) edge (1,3);
		\path [-] (0,2.5) edge (1,2.5);
		\path [-] (1,2.1) edge (3,2.1);
	\path [-] (2,0) edge (2,1.5);
		\path [-] (0,1.1) edge (2,1.1);
		\path [-] (2,0.4) edge (3,0.4);

\node [] (1) at (1,0.5) {$\sim46$};
\node [] (2) at (1,1.25) {$\sim53$};
\node [] (3) at (0.5,2) {$\sim54$};
\node [] (4) at (0.5,2.75) {$\sim49$};
\node [] (5) at (2.5,0.9) {$\sim50$};
\node [] (6) at (2.5,0.2) {$\sim45$};
\node [] (7) at (2,1.75) {$\sim58$};
\node [] (8) at (2,2.5) {$\sim46$};

\node [] (heatmap) at (1.5,-1.1) {\textbf{Output:}\\\textbf{Perturbed} partitions and \\ \texttt{COUNT}s of workers};


\path [->, very thick] (3.2,1.7) edge (4,1.7); %

\node[police,minimum size=.5cm] (PF) at (4.7,3.8) {Platform};
\node[businessman,minimum size=.5cm] (PF) at (6.1,3.85) {Requesters};

\node at (5.2, 1.5) {
    {\begin{varwidth}{\linewidth}\begin{itemize}
        \item Helps requesters \\designing tasks
        \item Helps promoting \\the platform
        \item Helps tuning \\wages
		\item etc.
    \end{itemize}\end{varwidth}}
};

\node [] (heatmap) at (5.2,-1.1) {\textbf{Uses:}\\ Assist Platform \\and Requesters};

\end{tikzpicture}}
  \caption{Overview of the \texttt{PKD} algorithm: supporting
    secondary usages of worker profiles with privacy guarantees}
  \label{fig:crowdpriv}
\end{figure}
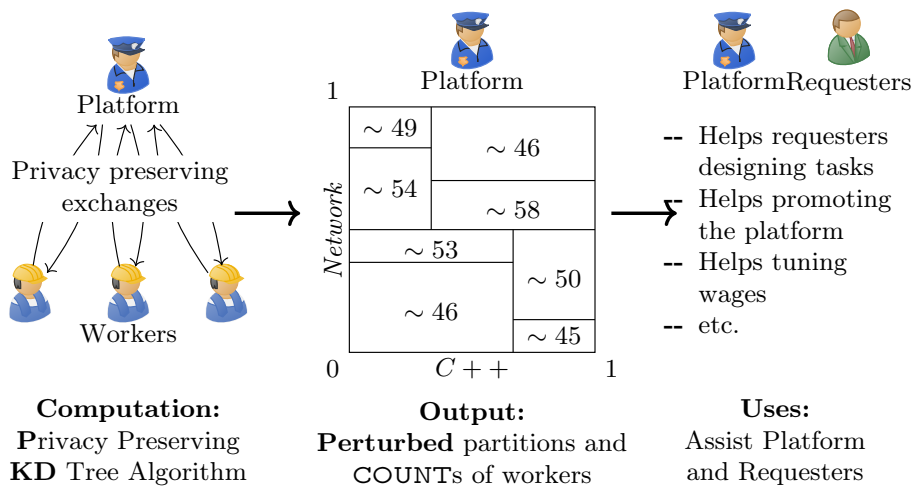

\paragraph{Our Contribution.}
First, we propose to benefit from the best of the two non-interactive
approaches described above by computing a privacy-preserving space
partitioning of the worker profiles (for coping with their
dimensionality) based on perturbed 1-dimensional histograms (for their
nice tolerance to differentially private perturbations). We propose
the \texttt{P}rivacy-preserving \texttt{KD}-Tree algorithm
(\texttt{PKD} for short, depicted in Figure~\ref{fig:crowdpriv}), a
privacy-preserving algorithm for computing a (perturbed)
multi-dimensional distribution of skills of the actual population of
workers. The \texttt{PKD} algorithm is distributed between mutually
distrustful workers and an untrusted platform. It consists in
splitting recursively the space of skills in two around the median
(similarly to the \emph{KD-tree} construction algorithm) based on the
1-dimensional histogram of the dimension being split, and it protects
workers' profiles all along the computation by combining
additively-homomorphic encryption together with differentially private
perturbation. No raw worker profile is ever communicated, neither to
the platform nor to other workers. The output of the \texttt{PKD}
algorithm is a hierarchical partitioning of the space of skills
together with the (perturbed) \texttt{COUNT} of workers per
partition (see Figure~\ref{fig:crowdpriv}). 
The \texttt{PKD} algorithm is complementary to privacy-preserving task
assignment works and can be used in conjunction with them provided
that the privacy models compose well. In particular, since our privacy
model is a computational variant of differential privacy, the
\texttt{PKD} algorithm composes well with state-of-the-art approaches
\cite{kajino2015phd,beziaud} since they are based on usual
computational cryptographic model or differential privacy model.
Second, we propose to benefit from recent progresses in Private
Information Retrieval techniques~\cite{xpir} in order to design a
solution to task assignment that is both private and affordable. We
perform an in-depth study of the problem of using PIR techniques for
proposing tasks to workers, show that it is NP-Hard, and come up with
the \texttt{PKD PIR Packing} heuristic that groups tasks together
according to the partitioning output by the \texttt{PKD}
algorithm. Obviously, the \texttt{PKD PIR Packing} heuristic composes
well with the \texttt{PKD} algorithm.

More precisely, we make the following contributions:
\begin{enumerate}
\item We design the \texttt{PKD} algorithm, a distributed
  privacy-preserving algorithm for computing a multi-dimensional
  hierarchical partitioning of the space of skills within a population
  of workers.
\item We formally prove the security of the \texttt{PKD} algorithm
  against \emph{honest-but-curious} attackers. The \texttt{PKD}
  algorithm is shown to satisfy a computational variant of
  differential privacy called the \texttt{$\epsilon_\kappa$-SIM-CDP}
  model. We provide a theoretical analysis of its complexity.
\item We provide an in-depth study of the problem of using PIR
  techniques for proposing tasks to workers and design the \texttt{PKD
    PIR Packing} heuristic that benefits from the partitioning
  computed by the \texttt{PKD} algorithm for grouping tasks
  together. We show that the \texttt{PKD PIR Packing} heuristic
  satisfies our privacy model.

\item We provide an extensive experimental evaluation of the
  \texttt{PKD} algorithm and of the \texttt{PKD PIR Packing} heuristic
  over synthetic and realistic data that demonstrates their quality
  and performance in various scenarios. Our realistic skills dataset
  is built from data dumps of \emph{StackExchange} online forums.
\end{enumerate}

The paper is organized as follows. Section~\ref{sec:problem}
introduces the participant model, the security and privacy models, and
the technical tools necessary in the rest of the
paper. Section~\ref{sec:pkd} describes the \texttt{PKD} algorithm in
details and formally analyzes its cost and
security. Section~\ref{sec:delivery} studies the problem of using PIR
techniques for task assignment, describes the \texttt{PKD PIR Packing}
heuristic, and formally analyzes its security. We discuss some details on how to allow updates for both the \texttt{PKD} algorithm and the \texttt{PKD PIR Packing} heuristic in Section~\ref{sec:discussion}. Section~\ref{sec:expe}
experimentally validates their quality and efficiency. In
Section~\ref{sec:rel}, we survey the related work. Finally,
Section~\ref{sec:conc} concludes and discusses interesting future
works.

%

%

\section{Preliminaries}%
\label{sec:problem}


\subsection{Participants Model}

Three types of participants collaborate together during our
crowdsourcing process. \emph{Workers} are interested in solving tasks
that are relevant to their profiles; \emph{requesters} propose tasks
to be solved by relevant workers; and an intermediary, \emph{i.e.,}
the \emph{platform}. %

A worker profile $p_i \in \mathcal{P}$ is represented by an
$n$-dimensional vector of floats, where each float value
$p_i[j] \in [0, 1]$ represents the degree of competency of the worker $i$
with respect to the $j^{th}$ skill. The set of skills available and
their indexes within workers' profiles is static and identical for all
profiles. %

A task $t_k \in \mathcal{T}$ is made of two parts. First, the
\emph{metadata} part is a precise description of the worker profiles
that are needed for the task completion. More precisely it is an
$n$-dimensional subspace of the space of skills. This work does not
put any constraint on the kind of subspace described in the metadata
part (\emph{e.g.,} hyper-rectangles, hyper-spheres, arbitrary set
operators between subspaces). However, for the sake of concreteness,
we focus below on metadata expressed as hyper-rectangles. More
formally, the metadata $m_k \in \mathcal{M}$ of a task
$t_k \in \mathcal{T}$ is an $n$-dimensional vector of ranges over
skills where the logical connector between any pair of ranges is the
conjunction. We call $m_k[j]$ the range of float values (between $0$ and $1$) for task $k$ and skill $j$.
The second part of a task consists in the necessary
information for performing the task and is represented as an arbitrary
bitstring $\{0, 1\}^*$. In this work, we essentially focus on the
metadata part of tasks. %
We say that a worker and a task match if the point described by the
worker profile belongs to the subspace described by the task metadata,
\emph{i.e.,} worker $p_i$ and task $t_k$ match if and only if
$\forall j \in [0, n-1]$, then $p_i[j] \in m_k[j]$.

We do not make strong assumptions on the resources offered by
participants. Workers, requesters, and the platform are equipped with
today's commodity hardware (\emph{i.e.,} the typical
CPU/bandwidth/storage resources of a personal computer). However, we
expect the platform to be available 24/7 - contrary to workers or
requesters - similarly to a traditional client/server setting.

We assume that all participants follow the \emph{honest-but-curious}
attack model in that they do not deviate from the protocol but make
use of the information disclosed, in any computationally-feasible way,
for inferring personal data. Workers may collude together up to a
reasonable bound denoted by $\tau$ in the following. %




\subsection{Privacy Tools}\label{subsec:sec}
%
\paragraph{Computational Differential Privacy.}

Our proposal builds on two families of protection mechanisms: a
\emph{differentially private} perturbation scheme and a
\emph{semantically secure} encryption scheme. The resulting overall
privacy model thus integrates the two families of guarantees
together. %
%
%
The original $\epsilon$-differential privacy model~\cite{Dwork2006}
(Definition~\ref{def:dp}) applies to a randomized function
$\mathtt{f}$ and aims at hiding the impact of any possible individual
value on the possible outputs of $\mathtt{f}$. In our context, the
function $\mathtt{f}$ is the \texttt{PKD}
algorithm. The $\epsilon$-differential privacy model requires that the
probability that any worker profile $p_i \in \mathcal{P}$ participates
to the computation of $\mathtt{f}$ be close to the probability that
$p_i$ does not participate by an $e^\epsilon$ factor, whatever the
output of $\mathtt{f}$. $\epsilon$-differential privacy holds against
information-theoretic adversaries (unlimited computational power).

\begin{definition}[$\epsilon$-differential privacy \cite{Dwork2006}]%
  \label{def:dp}%
  The randomized function $\mathtt{f}$ satisfies
  $\epsilon$-differential privacy, where $\epsilon > 0$, if: %
  $$\mathtt{Pr} [ \mathtt{f} ( \mathcal{P}_1 ) = \mathcal{O} ]
  \leq e^\epsilon \cdot \mathtt{Pr} [ \mathtt{f} ( \mathcal{P}_2 ) =
  \mathcal{O} ]$$ %
  for any set $\mathcal{O}\in Range(\mathtt{f})$ and any set of worker
  profiles $\mathcal{P}_1$ and $\mathcal{P}_2$ that differ in at most
  one profile.
\end{definition}%

The \texttt{$\epsilon_\kappa$-SIM-CDP} differential privacy
relaxation~\cite{mironov_computational_2009} requires that the
function actually computed be \emph{computationally indistinguishable}
from a pure (information theoretic) $\epsilon$-differentially private
function to adversaries whose size is polynomial in the security
parameter $\kappa \in \mathbb{N}$. The \texttt{$\epsilon_\kappa$-SIM-CDP} model
is especially relevant in our context because we combine a
differentially private perturbation scheme (information theoretic
guarantees) together with an additively-homomorphic encryption scheme
that provides computational security guarantees for performance
reasons.

\begin{definition}[$\epsilon_\kappa$-SIM-CDP privacy
  \cite{mironov_computational_2009} (simplified)]%
  \label{def:simcdp}%
  The randomized function $\mathtt{f_\kappa}$ provides
  $\epsilon_\kappa$\texttt{-SIM-CDP} if there exists a function
  $\mathtt{F_\kappa}$ that satisfies $\epsilon$-differential privacy
  and a negligible function $negl(\cdot)$, such that for every set of
  worker profiles $\mathcal{P}$, every probabilistic polynomial time
  adversary $\mathtt{A}_\kappa$, every auxiliary background knowledge
  $\zeta_\kappa \in \{0, 1\}^*$, it holds that: %
  $$|\mathtt{Pr} [ \mathtt{A}_k ( \mathtt{f}_\kappa ( \mathcal{P}, \zeta_\kappa ) ) = 1 ] -
  \mathtt{Pr} [ \mathtt{A}_k ( \mathtt{F}_\kappa ( \mathcal{P},
  \zeta_\kappa ) ) = 1 ] | \leq negl(\kappa)$$
\end{definition}

\paragraph{Achieving Differential Privacy with the Geometric Mechanism.}





A common mechanism for satisfying $\epsilon$-differential privacy with
functions that output floats or integers consists in adding random
noise to their outputs. In particular, the Geometric Mechanism (Definition~\ref{def:geom}) allows functions that output integers to be perturbed and to satisfy $\epsilon$-differential privacy, while maximizing utility for count queries as shown in~\cite{ghosh2012universally}.

\begin{definition}[Geometric mechanism~\cite{ghosh2012universally}]%
  \label{def:geom}%
  Let $\mathcal{G}$ denote a random variable following a two-sided
  geometric distribution, meaning that its probability density
  function is $g(z,\alpha) = \frac{1-\alpha}{1+\alpha} \alpha^{|z|}$
  for $z \in \mathds{Z}$.  Given any function
  $\mathtt{f} : \mathds{N}^{|\mathcal{X}|} \rightarrow \mathds{Z}^k$
  the Geometric Mechanism is defined as
  $M_G(x, f(.), \alpha) = \mathtt{f}(x) + (Y_1,…,Y_k)$ where $Y_i$ are
  independent identically distributed random variables drawn from
  $\mathcal{G}(e^{-\epsilon/\Delta \mathtt{f}})$, and
  $\Delta \mathtt{f}$ is its global sensivity
  $ \Delta \mathtt{f} = \displaystyle \max_{\mathcal{P}_1,
    \mathcal{P}_2}
  ||\mathtt{f}(\mathcal{P}_1)-\mathtt{f}(\mathcal{P}_2)||_1$ for all
  $(\mathcal{P}_1, \mathcal{P}_2)$ pairs of sets of worker profiles
  s.t. $\mathcal{P}_2$ is $\mathcal{P}_1$ with one profile more.
\end{definition}%





Intuitively, a distribution is said to be infinitely divisible if it
can be decomposed as a sum of an arbitrary number of independent
identically distributed random variables. This property allows to
distribute the generation of the noise over a set of participants. It
is valuable in contexts such as ours where no single trusted party, in
charge of generating the noise, exists. Definition~\ref{def:infinite}
below formalizes the infinite divisibility property, and
Theorem~\ref{the:inf2geom} shows that the two-sided geometric
distribution is infinitely divisible.

\begin{definition}[Infinite Divisibility]%
  \label{def:infinite}%
  A probability distribution with characteristic function $\psi$ is
  infinitely divisible if, for any integer $n\geq 1$, we have $\psi =
  \phi_n^n$ where $\phi_n$ is another characteristic function. In
  other words, a random variable $Y$ with characteristic function
  $\psi$ has the representation $Y \overset{d}{=} \sum_{i=1}^{n}X_i$
  for some independent identically distributed random variables $X_i$.
\end{definition}

\begin{theorem}[Two-sided Geometric Distribution is Infinitely
  Divisible]%
  \label{the:inf2geom}
  Let $Y$ follow two-sided geometric distribution with probability
  density function $d(z,\epsilon) = \frac{1-\epsilon}{1+\epsilon}
  \epsilon^{|z|}$ for any integer $z$. Then the distribution of $Y$ is
  infinitely divisible.  Furthermore, for every integer $n\geq 1$,
  representation of definition~\ref{def:infinite} holds. Each $X_i$ is
  distributed as $X_{1n} - X_{2n}$ where $X_{1n}$ and $X_{2n}$ are
  independent identically distributed random variable with negative
  binomial distribution, with probability density function $g(k,n)=
  \binom{k-1+1/n}{k}(1-\alpha)^k*\alpha^{1/n}$.
\end{theorem}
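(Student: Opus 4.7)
The plan is to work with characteristic functions and exhibit an explicit $n$-fold decomposition for every integer $n \geq 1$.

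First I would establish a convenient representation of $Y$ as a difference of two iid one-sided geometric variables. Let $G_1, G_2$ be independent with $P(G_i = k) = (1-\epsilon)\epsilon^k$ for $k \geq 0$; a direct convolution computation gives, for $z \geq 0$,
\[
P(G_1 - G_2 = z) = \sum_{k \geq 0} (1-\epsilon)^2 \epsilon^{z+2k} = \frac{(1-\epsilon)\epsilon^z}{1+\epsilon},
\]
which agrees with $d(z,\epsilon)$, and the symmetry in $z$ handles the negative case. This yields $Y \overset{d}{=} G_1 - G_2$ and, since $\phi_{G_i}(t) = \frac{1-\epsilon}{1-\epsilon e^{it}}$, the characteristic function factors as
\[
\phi_Y(t) \;=\; \phi_{G_1}(t)\,\overline{\phi_{G_2}(t)} \;=\; \frac{(1-\epsilon)^2}{(1-\epsilon e^{it})(1-\epsilon e^{-it})}.
\]

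Next, for every $n \geq 1$, I would set
\[
\phi_X(t) \;=\; \left(\frac{1-\epsilon}{1-\epsilon e^{it}}\right)^{1/n}, \qquad \phi_n(t) \;=\; \phi_X(t)\,\overline{\phi_X(t)},
\]
so that $\phi_n(t)^n = \phi_Y(t)$ by construction, matching Definition~\ref{def:infinite}. To identify $\phi_X$ as the characteristic function of a non-negative integer random variable, I would invoke the generalized binomial series
\[
(1-\epsilon e^{it})^{-1/n} \;=\; \sum_{k \geq 0} \binom{k + 1/n - 1}{k}\epsilon^k e^{itk},
\]
valid since $|\epsilon e^{it}| < 1$. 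Multiplying by $(1-\epsilon)^{1/n}$ exhibits $\phi_X$ as the characteristic function of a negative binomial random variable $X_{1n}$ with pmf proportional to $\binom{k + 1/n - 1}{k}\epsilon^k$, normalized by $(1-\epsilon)^{1/n}$, which is precisely (up to the natural re-parametrization between $\alpha$ and $1-\alpha$) the density $g(k,n)$ stated in the theorem. Taking $X_{2n}$ an independent copy, the variable $X_{1n}-X_{2n}$ has characteristic function $\phi_n$, and L\'evy's uniqueness theorem yields $Y \overset{d}{=} \sum_{i=1}^{n}(X_{1n}^{(i)} - X_{2n}^{(i)})$, the claimed representation.

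The main obstacle is verifying that $\phi_X$ is a bona fide characteristic function and not merely a formal series: one must check that the coefficients $\binom{k+1/n-1}{k}\epsilon^k(1-\epsilon)^{1/n}$ are non-negative and sum to one. Non-negativity follows from $\binom{k+1/n-1}{k} = \Gamma(k+1/n)/(k!\,\Gamma(1/n)) \geq 0$ since $1/n > 0$; summation to one follows by evaluating the series at $t = 0$, giving $(1-\epsilon)^{1/n} \cdot (1-\epsilon)^{-1/n} = 1$. Once these two checks are in place, the rest of the argument is a standard manipulation of characteristic functions, and the theorem follows.
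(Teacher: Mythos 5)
Your proof is correct, and its central step is the same as the paper's: both arguments hinge on writing the two-sided geometric variable as $G_1 - G_2$ for two iid one-sided geometrics and verifying this by the identical convolution computation $\sum_{k\ge 0}(1-\epsilon)^2\epsilon^{|z|+2k} = \frac{1-\epsilon}{1+\epsilon}\epsilon^{|z|}$. Where you diverge is in how the infinite divisibility of the one-sided geometric itself is handled: the paper simply invokes Theorem~\ref{the:inf_geom} (cited from Steutel--van Harn) as a black box, whereas you re-derive it by taking the $1/n$-th power of the characteristic function $\frac{1-\epsilon}{1-\epsilon e^{it}}$ and expanding via the generalized binomial series, then checking non-negativity and normalization of the resulting coefficients. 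Your route buys a self-contained proof that also makes transparent \emph{why} the components are negative binomial with shape parameter $1/n$, at the cost of a little more analysis (the legitimacy of the formal $1/n$-th root as a genuine characteristic function, which you do address); the paper's route is shorter but leans entirely on the citation. One further point in your favor: the pmf you derive, $\binom{k+1/n-1}{k}\epsilon^k(1-\epsilon)^{1/n}$, is the standard negative binomial parametrization and matches what Algorithm~\ref{algo:priv_median} actually samples, whereas the formula $g(k,n)=\binom{k-1+1/n}{k}(1-\alpha)^k\alpha^{1/n}$ in the theorem statement has the roles of $\alpha$ and $1-\alpha$ swapped relative to the base pmf $(1-\alpha)\alpha^k$; your remark about the ``natural re-parametrization'' is in fact flagging a typo in the statement rather than a defect in your argument.
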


To prove this result, we will use a similar result for the geometric
distribution.

\begin{theorem}[Geometric Distribution is Infinitely
  Divisible~\cite{steutel2003infinite}]%
  \label{the:inf_geom}%
  Let $Y$ have a geometric distribution with probability density
  function $f(k,\alpha) = (1-\alpha)*\alpha^k$ for $k \in \mathds{N}$.
  Then the distribution of $Y$ is infinitely divisible.  Furthermore,
  for every integer $n\geq 1$, representation of~\ref{def:infinite}
  holds.  Each $X_i$ is distributed as $X_{n}$ where $X_{n}$ are
  independent identically distributed random variable with negative
  binomial distribution, with probability density function
  $g(k,n,\alpha)= \binom{k-1+1/n}{k}(1-\alpha)^k\alpha^{1/n}$.
\end{theorem}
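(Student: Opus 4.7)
The plan is to work with probability generating functions (PGFs) and explicitly exhibit the $n$-fold decomposition. A direct summation of the geometric series gives the PGF of $Y$ as
$$G_Y(z) \;=\; \sum_{k\geq 0}(1-\alpha)\alpha^k z^k \;=\; \frac{1-\alpha}{1-\alpha z},$$
valid on $|z|<1/\alpha$. For each integer $n\geq 1$, I set $\phi_n(z) = G_Y(z)^{1/n} = (1-\alpha)^{1/n}(1-\alpha z)^{-1/n}$ using the principal branch of the $n$-th root. The key claim is that $\phi_n$ is itself the PGF of a discrete distribution supported on $\mathds{N}$; once this is established, the factorization $G_Y(z)=\phi_n(z)^n$ combined with the multiplicativity of PGFs under independent convolution yields $Y \overset{d}{=} \sum_{i=1}^n X_i$ with $X_i$ i.i.d.\ under $\phi_n$, which is exactly the representation required by Definition~\ref{def:infinite}.

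To identify $\phi_n$ concretely, I apply the generalized binomial theorem to $(1-\alpha z)^{-1/n}$, obtaining
$$\phi_n(z) \;=\; (1-\alpha)^{1/n}\sum_{k\geq 0}\binom{k-1+1/n}{k}\alpha^k\,z^k,$$
where $\binom{k-1+1/n}{k} = \frac{(1/n)(1/n+1)\cdots(1/n+k-1)}{k!}$ for $k\geq 1$ and equals $1$ for $k=0$. Reading off the coefficient of $z^k$ yields the negative-binomial density claimed in the theorem (with the roles of $\alpha$ and $1-\alpha$ interpreted per the paper's geometric parameterization). To confirm $\phi_n$ is a genuine PGF, I check nonnegativity of every coefficient --- which follows since $1/n>0$ makes each factor of the generalized binomial coefficient strictly positive and $(1-\alpha)^{1/n}\alpha^k>0$ for $\alpha\in(0,1)$ --- together with normalization, which is immediate from $\phi_n(1)=\bigl(\tfrac{1-\alpha}{1-\alpha}\bigr)^{1/n}=1$.

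The main subtle point is that the formal $n$-th root of an arbitrary PGF need not itself correspond to a probability distribution; the explicit power-series expansion above sidesteps any appeal to a general complex-analytic theorem by producing a manifestly nonnegative, normalized series, thereby pinning down the law of $X_n$. Once this is in hand, the identity $G_Y(z)=\phi_n(z)^n$ converts directly, via uniqueness of generating functions on $\mathds{N}$, into the distributional identity $Y \overset{d}{=} \sum_{i=1}^n X_i$, establishing infinite divisibility together with the explicit negative-binomial form of each summand.
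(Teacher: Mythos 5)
Your proof is correct, but note that the paper itself does not prove this statement at all: it is imported as a known result with a citation to Steutel and van Harn, and the proof environment that follows it in the source is actually the proof of the \emph{two-sided} version (Theorem~\ref{the:inf2geom}), which uses this theorem as a black box. So your probability-generating-function argument is a genuine addition rather than a variant of an existing proof. The argument itself is sound: $G_Y(z)=\tfrac{1-\alpha}{1-\alpha z}$, the explicit generalized-binomial expansion of $G_Y(z)^{1/n}$ exhibits manifestly nonnegative coefficients summing to $1$ (you are right that this is the one point that needs care, since an $n$-th root of a PGF is not automatically a PGF), and uniqueness of PGFs on $\mathds{N}$ then converts $G_Y=\phi_n^n$ into the distributional identity $Y\overset{d}{=}\sum_{i=1}^n X_i$. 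One detail worth making fully explicit: the coefficient you actually obtain is $\binom{k-1+1/n}{k}\,\alpha^k(1-\alpha)^{1/n}$, whereas the theorem as printed states $\binom{k-1+1/n}{k}\,(1-\alpha)^k\alpha^{1/n}$; given the paper's geometric density $f(k,\alpha)=(1-\alpha)\alpha^k$, your form is the consistent one and the printed statement has $\alpha$ and $1-\alpha$ interchanged (the same swap appears in the statement of Theorem~\ref{the:inf2geom}). You flag this parenthetically, which is the right call, but it deserves to be said plainly rather than absorbed into an ``interpretation'' of the parameterization.
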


\begin{proof}
  Using this theorem, proving that a two-sided
  geometric distribution with density function $d(z,\alpha) =
  \frac{1-\alpha}{1+\alpha} \alpha^{|z|}$is equal to the difference
  between two independent identically distributed geometric
  distributions with density function $f(k, \alpha) = (1-\alpha)
  \alpha^k$ is enough to deduce the result. Let $X_+$ and $X_-$ be two
  such random variables.

$$
P(X_+ - X_- = z) = \\ \left\{
  \begin{array}{ll}
	\mbox{if }z \geq 0\\
    \sum_{j=0}^{\infty} ((1- \alpha) \alpha^{z+j})((1- \alpha ) \alpha^j)\\
	\mbox{if } z < 0 \\
    \sum_{j=0}^{\infty} ((1- \alpha) \alpha^j)((1- \alpha ) \alpha^{-z+j})
  \end{array}
\right.
$$

$$
\begin{array}{ll}
  P(X_+ - X_- = z)&= \sum_{j=0}^{\infty}(1-\alpha)^2 \alpha^{|z| + 2j} \\
  &= (1-\alpha)^2 \alpha^{|z|} \sum_{j=0}^{\infty}(\alpha^2)^j\\
  &= (1-\alpha)^2 \alpha^{|z|} \frac{1}{1-\alpha^2}\\
  &= \frac{(1-\alpha)^2}{(1-\alpha)(1+\alpha)} \alpha^{|z|} \\
  &= \frac{1-\alpha}{1+\alpha} \alpha^{|z|} \\
  &= P(Y=z)
\end{array}
$$
for $Y$ a random variable with a two-sided geometric distribution with
parameter $\alpha$.

\end{proof}

Finally, Theorem~\ref{the:composition} 
states that differential privacy composes with itself gracefuly.

\begin{theorem}[Sequential and Parallel
  Composability~\cite{dwork2014algorithmic}]%
  \label{the:composition}%
  Let $\mathtt{f}_i$ be a set of functions such that each provides
  $\epsilon_i$-differential privacy. First, the \emph{sequential
    composability} property of differential privacy states that
  computing all functions on the same dataset results in satisfying
  $(\sum_i \epsilon_i)$-differential privacy. Second, the
  \emph{parallel composability} property states that computing each
  function on disjoint subsets provides $\mathtt{max} ( \epsilon_i
  )$-differential privacy.
\end{theorem}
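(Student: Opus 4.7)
The plan is to handle the two composability claims separately, both via a direct ratio-of-probabilities argument on the joint output, relying on the fact that the randomizations used inside different $\mathtt{f}_i$ are drawn independently.

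For sequential composability, I would take two neighboring worker-profile sets $\mathcal{P}_1, \mathcal{P}_2$ (differing in at most one profile), an arbitrary tuple of outputs $(\mathcal{O}_1, \ldots, \mathcal{O}_m) \in \prod_i \mathrm{Range}(\mathtt{f}_i)$, and rewrite
$$\Pr[\mathtt{f}_1(\mathcal{P}_1) = \mathcal{O}_1, \ldots, \mathtt{f}_m(\mathcal{P}_1) = \mathcal{O}_m] = \prod_{i=1}^{m} \Pr[\mathtt{f}_i(\mathcal{P}_1) = \mathcal{O}_i]$$
using the independence of the random coins of the $\mathtt{f}_i$'s. I would then divide by the analogous product for $\mathcal{P}_2$, bound each factor by $e^{\epsilon_i}$ using the $\epsilon_i$-DP guarantee of $\mathtt{f}_i$, and multiply to obtain $e^{\sum_i \epsilon_i}$, which matches Definition~\ref{def:dp} for the composed function.

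For parallel composability, I would fix a partition $\mathcal{P} = \bigsqcup_i \mathcal{P}^{(i)}$ so that $\mathtt{f}_i$ is applied to $\mathcal{P}^{(i)}$. Taking two neighbors $\mathcal{P}_1, \mathcal{P}_2$ that differ in exactly one profile, I would observe that this profile lies in exactly one block, say the $k$-th, so $\mathcal{P}_1^{(i)} = \mathcal{P}_2^{(i)}$ for every $i \neq k$. Factoring the joint probability as above and cancelling the $i \neq k$ factors, only the ratio for the $k$-th function remains, which is bounded by $e^{\epsilon_k} \leq e^{\max_i \epsilon_i}$, giving the claim.

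I do not expect a real obstacle: both arguments are essentially the standard textbook proofs from~\cite{dwork2014algorithmic}, and the only subtlety worth being careful about is making the independence of the internal randomness of the $\mathtt{f}_i$'s explicit (so that the joint probability really does factor as a product), together with the observation for parallel composition that a single-profile change affects only one block of the partition.
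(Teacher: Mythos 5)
Your proposal is correct, and it is the standard argument from~\cite{dwork2014algorithmic}; note that the paper itself states Theorem~\ref{the:composition} as an imported result and offers no proof of its own, so there is nothing in the text to compare against. Both halves of your sketch are sound: the product factorization for independently-randomized, non-adaptively composed mechanisms gives the $e^{\sum_i \epsilon_i}$ bound, and the single-block cancellation gives the $e^{\max_i \epsilon_i}$ bound for the parallel case. The one subtlety worth making explicit in the parallel argument is that it relies on the add/remove-one-profile neighboring relation (which is the convention the paper uses, cf.\ the sensitivity definition in Definition~\ref{def:geom}, where $\mathcal{P}_2$ is $\mathcal{P}_1$ with one profile more) and on the partition into blocks being determined by a data-independent rule applied to each record; under the replace-one-record convention a single change can move a profile between two blocks, two factors survive the cancellation, and the bound degrades to $e^{2\max_i \epsilon_i}$.
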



\paragraph{Additively-Homomorphic Encryption.}%
\label{subsubsec:enc}%



Additively-homomorphic encryption schemes essentially allow to perform
addition operations over encrypted data. Any additively-homomorphic
encryption scheme fits our approach provided that it satisfies the
following properties. First, it must provide \emph{semantic security
  guarantees}. Stated informally, this property requires that given a
ciphertext, the public encryption key, and possible auxiliary
information about the plaintext, then no polynomial-time algorithm is
able to gain non-negligible knowledge on the
plaintext~\cite{goldreich2005foundations}. Second, it must be
\emph{additively-homomorphic}. %
Informally, given $a$ and $b$ two integers, $\mathtt{E}$ the
encryption function, $X$ the encryption key, $K$ the decryption key,
$\mathtt{D}$ the decryption function, and $+_h$ the homomorphic
addition operator, then
$\mathtt{D}_K(\mathtt{E}_X(a) +_h \mathtt{E}_X(b)) == a+b$. Third, the
scheme must support \emph{non-interactive threshold decryption}. %
We additionally use this optional property, available in some schemes
(\emph{e.g.,}\cite{paillier1999public,damgaard2001generalisation}). It
allows the decryption key to be \emph{split} in $n_K$
\emph{key-shares} ${K_i}$ such that a complete decryption requires to
perform independently $T \leq n_K$ partial decryptions by distinct
key-shares. Note that in a typical key generation setting, pairs of
keys are generated once and for all by a non-colluding, independent
entity. %

Paillier cryptosystem~\cite{paillier1999public} and its Damgard-Jurik generalization~\cite{damgaard2001generalisation} are instances of encryption schemes that provide the desired properties and are widely available. We refer the interested reader to the original papers for details.


\paragraph{Private Information Retrieval.}

In a nutshell, Private Information Retrieval (PIR) techniques allow a
client to download binary objects (\emph{e.g.,} a record, a movie)
stored on a server in a \emph{library} of objects, without revealing
to the server which of the binary objects has been downloaded. We call
this function the \texttt{PIR-get} function. Emerging PIR protocols
are now affordable and able to cope with the latency constraints of
real-life scenarios (\emph{e.g.,} in media consumption scenarios
\cite{popcorn,xpir,itpir}).
%
%
%
Our approach makes use of the security guarantees of PIR
techniques. In this paper, for concreteness, we consider a PIR protocol
based on additively-homomorphic encryption called
\emph{XPIR}~\cite{xpir}. It is part of the \emph{computational PIR}
family of protocols, that provides computational security
guarantees. %
%
However our approach could use other protocols, that provide different
tradeoffs efficiency/security (\emph{e.g.,} an
\emph{information-theoretic PIR} protocol such as~\cite{itpir} that
uses efficient bitwise XOR operators, provides information-theoretic
security, but assumes no collusion between several supporting
servers).

XPIR~\cite{xpir} considers a \emph{library} $\mathcal{L}$ of $n$
binary objects, called \emph{items} in the following. All items are
assumed to share the same length in bits, denoted $l$. The library is
stored as a matrix of $y$-bit integers: $\mathcal{L} \in
(\{0,1\}^y)^{n \times (l/y)}$. XPIR uses an additively-homomorphic
cryptosystem (see above) and implements the \texttt{PIR-get} function
as follows. XPIR assumes that each item has a unique id, and that
clients know the list of ids of the existing items in
$\mathcal{L}$. Now a client wants to retrieve the item of id $i$ and
thus calls \texttt{PIR-get(i)}. First, it instanciates a vector of $n$
bits, initializes all bits to $0$s, sets to $1$ the bit at id $i$,
encrypts each bit separately, and sends the resulting encrypted vector
- denoted $c$ - to the server. Second, for all $j$ in $[1, l/y]$, the
server computes $r_j = \prod_{i=1}^n c[i]^{L_{i,j}}$ %
(recall that the product between two encrypted integers is the way the
additively-homomorphic addition operator $+_h$ is performed by the underlying cryptosystem) and sends it back to the client. Each
$r_j$ is thus actually a sum of (1) encrypted $0$s (corresponding to
the encrypted $0$s in $c$) and (2) an encrypted bit-subsequence of the
requested binary object (corresponding to the encrypted $1$ in $c$).
Third and finally, the client decrypts each $r_j$ received - obtaining
hence the various bit-subsequences of the item requested - and
concatenates them to obtain the complete bitstring.

It is to be noticed that three main parameters may affect PIR efficiency: the size of the library itself, that has to be read for each call to the \texttt{PIR-get} function, the size of items, that impacts the size of downloads (multiplied by an \emph{expansion factor}, the ratio between the size of an encrypted value and the clear value), and the number of items (again, multiplied by the expansion factor), for the size of upload of the request.

\subsection{Space Partitionning based on KD-Trees}

A KD-Tree~\cite{kdtree} is a well-known data structure designed for
partitioning datasets in k-dimensional balanced partitions. It is
constructed by recursively dividing the space in two around the
median. It is widely used to index data. Moreover, it contains
valuable information about the data distribution (it is balanced)
without sizing individual data points. %

\subsection{Quality Measure}
We evaluate the quality of the (perturbed) output of the \texttt{PKD}
algorithm by measuring the loss of accuracy resulting from its use
rather than using non-protected raw profiles. As stated in
Definition~\ref{eq:quality}, %
given a set of tasks $T$, we compute the average absolute error
between (1) the approximate number of workers matching with each task
$t \in \mathcal{T}$ according to the (perturbed) partitions and counts
and (2) the exact non-protected number of matching workers. Note that
the error comes from the perturbation used for satisfying differential
privacy but also from the inherent approximation due to the use of a
coarse grain data structure (partitions and counts) synthesizing raw
data.

\begin{definition}[Quality]
  \label{eq:quality}
  Given a set of worker profiles $\mathcal{P}$, a set of tasks
  $\mathcal{T}$, and, for each task $t \in \mathcal{T}$, the real
  number of workers matching with it $t_{match}$ and its approximated
  value $\widetilde{t_{match}}$ according to the perturbed distribution we
  provide.  We compute the quality of the distribution as
  $$Q = \frac{1}{|\mathcal{T}|}\sum_{t\in{\mathcal{T}}}\frac{|t_{match} -
  \widetilde{t_{match}}|}{t_{match}}$$
\end{definition}

We also measure the quality of our assignment by using \emph{precision}, to size the number of tasks that are uselessly downloaded by workers, as seen in Definition~\ref{def:precision}.
Note that in our context, all matching tasks will be downloaded, such that a \emph{recall} measure is not relevant as it will always be equal to $1$.

\begin{definition}[Precision]
	\label{def:precision}
For a given assignment, we call precision the fraction of downloaded tasks that match with the worker.
This precision is computed as: $$precision=mean_{t\in \mathcal{T}}(\frac{|\{w: match(w,t) \wedge download(w,t)\}|}{|\{w: download(w,t)\}|})$$
\end{definition}

\section{The \texttt{PKD} Algorithm}
\label{sec:pkd}%

Our proposal comes from a rethinking of the centralized version of the
\emph{KD-Tree} construction algorithm~\cite{kdtree}, which is
essentially a recursive computation of medians. In our setting, each
worker holds its own, possibly sensitive, profile and no single
centralized party is trusted. Centralizing the profiles of workers in
order to compute a KD-Tree over them is therefore not possible. A naive
approach could make use of an order-preserving encryption
scheme~\cite{AKSX2004} (OPE), but %
%
%
these schemes are well-known for their low security
level~\cite{BCO2012} and especially their inherent weaknesses against
frequency analysis attacks. We rather favor sound privacy guarantees -
without sacrificing efficiency - by approaching the median through the
computation of histograms, with a computation distributed between
workers and the platform. Similarly to its centralized counterpart,
each iteration of our recursive algorithm divides the space of
skills in two around the median (of one dimension at a time) given a
perturbed histogram representing the distribution of a single skill in
the crowd. For simplicity, the current version of the algorithm
terminates after a fixed number of splits, but more elaborate
termination criteria can be defined. %

\subsection{Computing Private Medians}
\paragraph{From Private Sum to Private Histogram.}%
\label{sec:private_sum}%
We start by explaining how to perform differentially private sums
based on noise-shares and additively-homomorphic additions. %
It allows the platform to get the result of the addition of a single
bin over $n$ different workers while satisfying differential
privacy. We then show that this function is a sufficient building
block for computing perturbed histograms.

%

Let us consider a fixed $\epsilon > 0$, a maximum size of collusion
$\tau \in \mathds{N}, \tau > 0$, a set of workers $\mathcal{P}$ (of
size $|\mathcal{P}|$), and a single bin $b_{i}$ associated to a range
$\phi$.  Each worker $p_i$ holds a single private local value,
\emph{i.e.,} her skill on the dimension that is currently being split.
Initially, the bin $b_{i}$ is set to $0$ on all workers. Only the
workers whose local values fall within the range of the bin $b_{i}$
set $b_{i}$ to $1$. Our first goal is to compute the sum of
the 
bins $b_{i}$ of all workers $p_i \in \mathcal{P}$ such that no set of
participant smaller than $\tau$ can learn any information that has not
been perturbed to satisfy $\epsilon$-differential privacy.

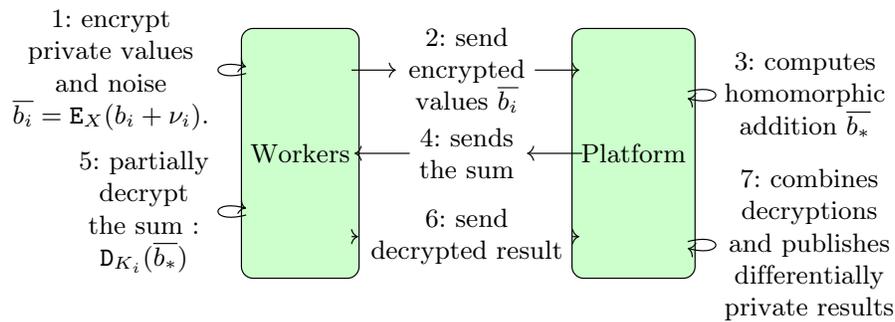
\begin{figure}
  \centering \resizebox{1\linewidth}{!}{
    \begin{tikzpicture}[every text node part/.style={align=center}]
      \tikzstyle{workers}=[rectangle,rounded corners, minimum width =
      1cm, minimum height = 3cm, text centered, draw=black,
      fill=green!20] \tikzstyle{platform}=[rectangle,rounded corners,
      minimum width = 1cm, minimum height = 3cm, text centered,
      draw=black, fill=green!20] \tikzstyle{every
        node}=[font=\footnotesize] \node (w) [workers] at (0,0)
      {Workers}; \node (pf) [platform] at (4,0) {Platform}; \node (w0)
      at (-0.5,1) {} edge [pre, loop left] node[align=center] {1:
        encrypt\\ private values \\ and noise\\ $\overline{b_{i}} =
        \mathtt{E}_X (b_{i} + \nu_{i})$.} %
      (w0) ; \node (t1) at (0.5,1){}; \node (t2) at (3.5,1){}; \node
      (m0) at (2, 1)
      {2: send\\
        encrypted\\values $\overline{b_{i}}$} edge [pre] (t1) edge
      [post] (t2); \node (pf0) at (4.5,0.7) {} edge [pre, loop right]
      node[align=center] {3: computes\\homomorphic\\addition
        $\overline{b_*}$} (pf0) ; \node (t3) at (0.5,0){}; \node (t4)
      at (3.5,0){}; \node (m1) at (2, 0) {4: sends\\the sum} edge
      [post] (t3) edge [pre] (t4); \node (w1) at (-0.5,-0.7) {} edge
      [pre, loop left] node[align=center] {5: partially\\ decrypt\\the
		  sum :\\ $\mathtt{D}_{K_i} (\overline{b_*})$} (w1) ; \node (t5) at
      (0.5,-1){}; \node (t6) at (3.5,-1){}; \node (m2) at (2, -1) {6:
        send\\decrypted result} edge [pre] (t5) edge [post] (t6);
      \node (pf1) at (4.5,-1.1) {} edge [pre, loop right]
      node[align=center] {7: combines\\ decryptions\\ and
        publishes\\differentially\\private results} (pf1) ;

  \end{tikzpicture}
}
\caption{Computing a Private Sum} %
\label{fig:sum}
\end{figure}

The privacy-preserving sum algorithm, depicted in Fig.~\ref{fig:sum},
considers that keys have been generated and distributed to workers,
such that $T > \tau$ key-shares are required for decryption (see
Section~\ref{subsubsec:enc}). The algorithm consists in the following
steps:
\begin{description}
\item [Step 1 (each worker) - Perturbation and Encryption] %
  %
  First, each worker perturbs its value by adding a noise-share,
  denoted $\nu_{i}$, to it. Noise-shares are randomly generated
  locally such that the sum of $|\mathcal{P}|-\tau$ shares satisfies
  the two-sided geometric distribution (see Definition~\ref{def:geom}
  for the geometric distribution, and Theorem~\ref{the:inf2geom} for
  its infinite divisibility). %
  Note that noise-shares are overestimated\footnote{We require that
    the sum of $|\mathcal{P}|-\tau$ noise-shares be enough to satisfy
    differential privacy but we effectively sum $|\mathcal{P}|$
    noise-shares. Note that summing more noise-shares than necessary
    does not jeopardize privacy guarantees.} to guarantee that the
  final result is differentially private even for a group of up to
  $\tau$ workers sharing their partial knowledge of the total noise
  (their local noise-share).  Each worker then encrypts
  $b_{i} + \nu_i$ by the additively-homomorphic encryption scheme in
  order to obtain $\overline{b_{i}}$ :
  $\overline{b_{i}} = \mathtt{E}_X (b_{i} + \nu_{i})$. %
\item [Step 2 (platform) - Encrypted Sum] %
  The platform sums up together the encrypted values received~:
  $\overline{b_*} = \sum_{\forall i} \overline{b_{i}}$ where the sum
  is based on the additively-homomorphic addition $+_h$. %
\item [Step 3 (subset of workers) - Decryption] %
  The platform finally sends the encrypted sum $\overline{b_*}$ to at
  least $T$ distinct workers. Upon reception, each worker partially
  decrypts it based on her own key-share - $\mathtt{D}_{K_i} (\overline{b_*})$ -
  and sends it back to the platform. The platform combines the partial
  decryptions together and obtains $\widetilde{b_*}$, \emph{i.e.,} the
  differentially private sum of all private bins $b_{i}$.
\end{description}

\begin{algorithm}[tbhp]
  \KwData{\\
    $\mathcal{P}$: Set of workers \\
    $\mathcal{D}_{min}, \mathcal{D}_{max}$: Definition domain of the private local value of workers \\
    $l$: Number of bins \\
    $(\phi_0, \ldots, \phi_{l-1})$ : the $l$ ranges of the bins \\
    $X$ : public encryption key (same for all workers). %
    \\
    $\{K_i\}$ : private decryption keys (one per worker). \\
    $\epsilon_{m}$: differential privacy budget for this iteration \\
    $\tau$: maximum size of a coalition ($\tau < |\mathcal{P}|$) \\
    $T$: number of key-shares required for decryption ($T > \tau$)
  }%
  \KwResult { %
    $\widetilde{m}$: estimate of the median of the workers' local private
    values %
  }%

  \For{%
    all workers $p_i \in \mathcal{P}$%
  }{ %

    Compute $l$ noise shares $\nu_{i,j} = R_1 - R_2$, where
    $0\leq j < l$ and $R_1$ and $R_2$ are independent identically
    distributed random variables with probability density function
    $g(k)=\binom{k-1+\frac{1}{|\mathcal{P}|-\tau}}{k}(e^{-\epsilon_m})^k(1-e^{-\epsilon_m})^{\frac{1}{|\mathcal{P}|-\tau}}$
    \\
    Set the value of the bin
    $\overline{b_{i,k}}= \mathtt{E}_X (1+\nu_{i,k})$, where $\phi_k$
    is the histogram range within which the local value
    of the worker falls. \\
    Set the value of the other bins to:
    $\overline{b_{i,j}} = \mathtt{E}_X (\nu_{i,j})$, $j\neq k$.
    \\

  }%
  \textbf{Platform :} sum the encrypted bins at the same index
  received from different workers in order to obtain the encrypted
  perturbed histogram :
  $(\overline{b_{*,0}} = \sum_{\forall i} \overline{b_{i, 0}}, \ldots,
  \overline{b_{*,l-1}} = \sum_{\forall i} \overline{b_{i,
      l-1}})$. \\
  \textbf{Workers (T distinct workers) :} Decrypt partially the
  encrypted perturbed histogram bin per bin and send the resulting
  partially decrypted histogram to the platform :
  $(\mathtt{D}_{K_i}(\overline{b_{*,0}}), \ldots,
  \mathtt{D}_{K_i}(\overline{b_{*,l-1}}))$. \\
  \textbf{Platform :} Combine the partial decryptions together to
  obtain the decryption of the histogram and estimate the median
  $\widetilde{m}$ according to Equation~\ref{eq:median}.



  \Return{$\widetilde{m}$}

  \caption{\texttt{PrivMed} : Privacy-Preserving Estimation of the
    Median in the \texttt{PKD} algorithm}%
  \label{algo:priv_median}
\end{algorithm}

Now, assuming that the histogram format is fixed beforehand -
\emph{i.e.,} number $l$ of bins and ranges $(\phi_0, \ldots,
\phi_{l-1})$ - it is straightforward to apply the private sum
algorithm on each bin for obtaining the perturbed histogram based on
which the median can then be computed. %
For example, in order to get a histogram representing the
distribution of skill values for, \emph{e.g.,} \texttt{Python
  programming}, and assuming a basic histogram format - \emph{e.g.,}
skill values normalized in $[0, 1]$, $l=10$ bins, ranges
$(\phi_0=[0,0.1[, \ldots, \phi_9=[0.9,1])$ - it is sufficient to
launch ten private sums to obtain the resulting perturbed $10$-bins
histogram.

\paragraph{\texttt{PrivMed}: privacy-preserving median computation.} %
The histogram computed based on the privacy-preserving sum algorithm 
can be used by the platform to estimate the value of the median around
which the split will be performed. When by chance the median falls
precisely between two bins (\emph{i.e.,} the sum of the bins on the
left is exactly $50\%$ of the total sum, same for the bins on the
right) its value is exact. But when the median falls within the range
of one bin (\emph{i.e.,} in any other case), an additional hypothesis
on the underlying data distribution within the bin must be done in
order to be able to estimate the median.  For simplicity, we will
assume below that the distribution inside each bin is uniform but a
more appropriate distribution can be used if known. %

\begin{equation}\label{eq:median}
  \widetilde{m} = \mathcal{D}_{min} + \frac{\mathcal{D}_{max}-\mathcal{D}_{min}}{l}\cdot(k + \frac{1}{2} + \frac{\theta_> - \theta_<}{2\cdot \widetilde{b_{*,k}}})
\end{equation}%
The resulting \texttt{PrivMed} algorithm is detailed in Algorithm~\ref{algo:priv_median}.

Let's consider the histogram obtained by the private sum algorithm. It
is made of $l$ bins denoted $(\widetilde{b_{*,0}}, \ldots, \widetilde{b_{*, l-1}})$, and each
bin $\widetilde{b_{*,j}}$ is associated to a range $\phi_j$. The ranges partition
a totally ordered domain ranging from $\mathcal{D}_{min}$ to
$\mathcal{D}_{max}$ (\emph{e.g.,} from $\mathcal{D}_{min} = 0$ to
$\mathcal{D}_{max}=1$ on a normalized dimension that has not been
split yet). Let $\phi_{k}$ denote the 
range containing the median, 
$\theta$ denote the sum of all the bins - \emph{i.e.,} $\theta =
\sum_{i<l} \widetilde{b_{*,i}}$ - %
and $\theta_{<}$ (resp. $\theta_>$) the sum of the bins that are
strictly before (resp. after) $\widetilde{b_{*,k}}$ - \emph{i.e.,} $\theta_< =
\sum_{i<k} \widetilde{b_{*,i}}$ (resp. $\theta_> = \sum_{i>k} \widetilde{b_{*,i}}$). %
Then, an estimation $\widetilde{m}$ of the median can be computed as
follows\footnote{Note that in the specific case where the median falls within a bin
equal to $0$ (\emph{i.e.,} $\widetilde{b_{*,k}}=0$), then any value
within $\phi_k$ is equivalent.}: %

\subsection{Global Execution Sequence}
\label{sec:global_exec}

Finally, the Privacy-preserving KD-Tree algorithm, \texttt{PKD} for
short, performs the median estimation described above iteratively
until it stops and outputs (1) a partitioning of the space of skills
together with (2) the perturbed number of workers within each
partition. The perturbed number of workers is computed by using an
additional instantiation of the private sum algorithm when computing
the private medians\footnote{Note that the perturbed histograms could
  have been used for computing these counts but using a dedicated
  count has been shown to result in an increased precision.}. We focus
below on the setting up of the parameters of the various iterations,
and on the possible use of the resulting partitions and counts by the
requesters. An overview is given in Algorithm~\ref{algo:priv_kdtree}.

\paragraph{Main Input Parameters.}%
\label{sec:distrib_epsilon}%



Intuitively, the privacy budget $\epsilon$ input of the \texttt{PKD} algorithm
sets an upper bound on the information disclosed along the complete
execution of the algorithm - it must be \emph{shared} among the
various data structures disclosed.
Thus, each iteration inputs a portion of the initial privacy budget
such that the sum of all portions remains lower than or equal to
$\epsilon$ - see the composability property in
Theorem~\ref{the:composition}. Computing a good budget allocation in a
tree of histograms is a well-known problem tackled by several related
works~\cite{psd,qardaji2013differentially}. %
In this work, we simply rely on existing privacy budget distribution
methods. For example, based on \cite{psd}, $\epsilon$ is divided as
follows. First, $\epsilon$ is divided in two parts: one part, denoted
$\epsilon^m$, is dedicated to the perturbations of the medians
computations (\emph{i.e.,} the bins of the histograms), and the other
part, denoted $\epsilon^c$, is dedicated to the perturbation of the
number of workers inside each partition.  Second, a portion of each of
these parts is allocated to each iteration $i$ as follows. For each
iteration $i$ such that $0 \leq i \leq h$, where $h$ is the total
number of iterations (\emph{i.e.,} the height of the tree in the
KD-Tree analogy), the first iteration is $h$ (\emph{i.e.,} the root of
the tree) and the last one is $0$ (\emph{i.e.,} the leaves of the
tree) :
\begin{equation}
  \epsilon^c_i =
  2^{(h-i)/3}\epsilon^c\frac{\sqrt[3]{2}-1}{2^{(h+1)/3}-1}
\end{equation}

\begin{equation}
  \epsilon^m_i = \frac{\epsilon^m}{h}
\end{equation}

Note that similarly to~\cite{psd}, we set the distribution of
$\epsilon$ between $\epsilon^c$ and $\epsilon^m$ as follows:
$\epsilon^c = 0.7\cdot \epsilon$ and $\epsilon^m = 0.3\cdot
\epsilon$. Other distributions could be used.

%


The \texttt{PKD} algorithm stops after a fixed number of iterations
known beforehand. Note that more elaborate termination criteria can be
defined (\emph{e.g.,} a threshold on the volume of the subspace or on
the count of worker profiles contained). The termination criteria
must be chosen carefully because they limit the number of splits of
the space of skills and consequently the number of dimensions of
worker profiles that appear in the final subspaces. Ideally, the
termination criteria should allow at least one split of all
dimensions. However, this may not be possible or suitable in practice
because of the limited privacy budget. In this case, similarly to a
composite index, a sequence of priority dimensions must be chosen so
that the algorithm splits them following the order of the
sequence. 
The dimensions that are not part of the sequence will simply be
ignored. Note that the number of dimensions in worker profiles, and
their respective priorities, is closely related to the application
domain (\emph{e.g.,} %
How specific does the crowdsourcing process need to be ?). %
%
In this paper, we make no assumption on the relative importance of
dimensions.

\begin{algorithm}[tbhp]
  \KwData{\\
    $\mathcal{P}$: Set of workers \\
	$E$ the current space of skills of $d$ dimensions
    \\
	$h$: height of the KD-Tree
  }%
  \KwResult { %
    $T$: A Privacy-preserving KD-tree with approximate counts of workers for each leaf
  }%

  We create $T$ as a single leaf, containing the whole space $E$ and a count of all workers.

  \While{current height is smaller than final height}
  {
	  Choose a dimension $d$ (for exemple, next dimension).\\
	  \For{all leaves of the current tree $T$}{

	  Compute $m$ the private median of the space of the leaf, as explained in Section~\ref{sec:private_sum}.\\
	  For both subspaces separated by the median, compute a private count as explained in Section~\ref{sec:private_sum}.\\
  	  Create two leaves, containing the two subspaces and associated counts. \\
	  Replace the current leaf by a node, containing the current space and count, and linking to the two newly created leaves.




	}
	Increment the current height.


  }

  Apply post-processing techniques explained in Section~\ref{sec:global_exec}.

  \Return{Tree $T$}

  \caption{The \texttt{PKD} Algorithm}%
  \label{algo:priv_kdtree}
\end{algorithm}

\paragraph{Post-Processing the Output.}
Considering the successive splits of partitions, we can enhance the
quality of the counts of workers by exploiting the natural constraints
among the resulting tree of partitions : we know that the number of
workers in a parent partition must be equal to the number of workers
in the union of its children. %
\emph{Constrained inference techniques} have already been studied as a
post-processing step to improve the quality of trees of perturbed
histograms, first in \cite{hay2010boosting} and then improved in
\cite{psd} which adapts the method to non-uniform distribution of
budget. These constrained inference techniques can be used in our
context in a straightforward manner in order to improve on the quality
of the resulting partitioning. We refer the interested reader to
\cite{hay2010boosting,psd} for details.

\subsection{Complexity Analysis}%
\label{sec:compl-analysis}





We evaluate here the complexity of the \texttt{PKD} algorithm with
respect to the number of encrypted messages computed and sent both to and by the platform. The results are summed up in
Table~\ref{tab:cost}. %

The first step to consider is the number of partitions created in the
KD-tree. Seen as an index (with one leaf for each point), the
construction of a KD-tree requires $2^{h+1}-1$ nodes, including
$2^h-1$ internal nodes, where $h$ is the maximum height of the
KD-Tree. For each node, an encrypted sum is performed, and for each
internal node, a histogram is additionally computed, which require
$l$ sums, for a total of $(2^{h+1}-1) + (l\cdot(2^h-1))$ sums.  These
counts all require the participation of every worker: for each count,
$|\mathcal{P}|$ encrypted messages are computed and sent.

The platform also sends back encrypted messages for each sum, for
decryptions to be performed. For each sum, it sends at least $T$ times
the homomorphically computed sum, where $T$ is the threshold number of
key-shares required for decryption. For simplicity, we assume that the
platform sends the cyphertexts to $T$ workers (these are the only
encrypted messages that have to be sent to workers during this
protocol). Each contacted worker then answers by an encrypted value
(the partial decryption). As a conclusion, the total number of
encrypted values sent by the workers to the platform
$\mathcal{M}_{\Sigma w}$ is:
\begin{equation}
  \label{eq:w-send-total}%
  \mathcal{M}_{\Sigma w}=(|\mathcal{P}| + T) \cdot ( l \cdot (2^h-1) + (2^{h+1}-1))
\end{equation}

However, as our computation is distributed among all workers, each
worker only sends fewer encrypted messages on average
$\overline{\mathcal{M}_w}$.
\begin{equation}
    \label{eq:w-send-avg}%
\overline{\mathcal{M}_w}=(1+\frac{T}{|\mathcal{P}|}) \cdot (l \cdot (2^h-1) + (2^{h+1}-1))
\end{equation}


Finally, the platform sends $\mathcal{M}_{pf}$ encrypted messages.
\begin{equation}
  \label{eq:pf-send}%
  \mathcal{M}_{pf}=T \cdot (l \cdot (2^h-1) + (2^{h+1}-1))
\end{equation}

\begin{table}
  \centering
  \begin{tabular}{|l|l|}
    \hline
    To the platform & $(|\mathcal{P}| + T) \cdot ( l \cdot (2^h-1) + (2^{h+1}-1))$ \\
    \hline
    By worker (avg) & $(1+\frac{T}{|\mathcal{P}|})\cdot(l\cdot(2^h-1) + (2^{h+1}-1))$ \\
    \hline
    By the platform & $T\cdot(l\cdot(2^h-1) + (2^{h+1}-1))$\\
    \hline
  \end{tabular}
  \caption{Number of encrypted messages sent. $\mathcal{P}$ is the set
    of workers, $T$ the number of partial keys required for
    decryption, $h$ the depth of the KD-tree, and $l$ the number of
    bins per median}
  \label{tab:cost}
\end{table}

\subsection{Security Analysis}
The only part of the \texttt{PKD} algorithm that depends on raw data
is the private sum. The security analysis thus focuses on proving that
a single private sum is secure, and then uses the composability
properties (see
Theorem~\ref{the:composition}). Theorem~\ref{thm:sum-secure} proves
that the privacy-preserving sum algorithm is secure. We use this
intermediate result in Theorem~\ref{thm:kd-secure} to prove the
security of the complete \texttt{PKD} algorithm. %

\begin{theorem}[Security of the privacy-preserving sum algorithm]
  \label{thm:sum-secure}
  The privacy-preserving sum algorithm %
  satisfies \texttt{$\epsilon_\kappa$-SIM\--\-CDP} privacy against
  coalitions of up to $\tau$ participants.
\end{theorem}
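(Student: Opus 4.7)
The plan is to instantiate Definition~\ref{def:simcdp} by exhibiting an ideal $\epsilon$-differentially private function $\mathtt{F}_\kappa$ such that the view of any coalition $\mathcal{C}$ of at most $\tau$ workers (possibly together with the platform, which holds no key-share anyway) in the real protocol is computationally indistinguishable from its view of $\mathtt{F}_\kappa$. I would take $\mathtt{F}_\kappa$ to be the clear-text analogue of the protocol: it gathers every $b_i$ and $\nu_i$, returns $\widetilde{b_*} = \sum_i (b_i+\nu_i)$, and delivers to $\mathcal{C}$ only the coalition's own inputs together with $\widetilde{b_*}$. Because the global sensitivity of the bin-sum is $1$ and because the aggregate honest noise $\sum_{j\notin\mathcal{C}}\nu_j$ contains, by the infinite-divisibility representation of Theorem~\ref{the:inf2geom}, a two-sided geometric summand with parameter $e^{-\epsilon}$ (the remaining shares being an independent post-processing), $\mathtt{F}_\kappa$ satisfies $\epsilon$-differential privacy via the Geometric Mechanism of Definition~\ref{def:geom}.

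First I would enumerate the coalition's real-world view: the key-shares, plaintext bins and noise-shares of the corrupted workers; the ciphertexts $\overline{b_j}$ sent by honest workers; the homomorphically-added ciphertext $\overline{b_*}$; the partial decryptions performed by the $T$ decryptors; and the final plaintext $\widetilde{b_*}$. I would then build a simulator $\mathtt{Sim}$ that, given only the coalition's own inputs and the value $\widetilde{b_*}$ returned by $\mathtt{F}_\kappa$, produces a pseudo-view by (i) replacing every honest ciphertext by a fresh encryption of zero, (ii) computing the homomorphic sum of these fakes with the genuine corrupted ciphertexts, and (iii) running the standard threshold-decryption simulator available for Paillier-style cryptosystems~\cite{paillier1999public,damgaard2001generalisation} to produce partial decryption transcripts for the honest key-holders that open to $\widetilde{b_*}$.

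Indistinguishability would then follow from a short hybrid argument. The first hop replaces honest ciphertexts by encryptions of zero and reduces to the semantic security of the additively-homomorphic scheme; the reduction is sound because the coalition holds at most $\tau < T$ key-shares and hence cannot act as a decryption oracle. The second hop swaps the honest partial decryptions for their simulated counterparts; this is perfect or statistical depending on the concrete threshold scheme. After these two hops, the simulated view is, by construction, identically distributed to the view induced by running $\mathtt{F}_\kappa$, so the overall distinguishing advantage of any PPT adversary $\mathtt{A}_\kappa$ is bounded by the semantic-security advantage plus the simulator error, both negligible in $\kappa$, which yields $\epsilon_\kappa$-SIM-CDP.

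The main obstacle I anticipate is the interplay between the two hybrid hops: when some corrupted worker is itself among the $T$ decryptors, the simulator must produce partial decryption transcripts for the $T-|\mathcal{C}\cap\mathcal{P}|$ honest key-holders that both (a) open the fake homomorphic sum to the prescribed $\widetilde{b_*}$ and (b) remain indistinguishable from the real ones in the presence of the corrupted key-shares. This is exactly where the inequality $T>\tau$ is invoked, and where one must rely on the specific simulatability guarantees of the underlying threshold cryptosystem; the remaining ingredients—sensitivity analysis, geometric-mechanism privacy, and the infinite-divisibility decomposition of the aggregate honest noise—are a direct application of the tools already collected in Section~\ref{subsec:sec}.
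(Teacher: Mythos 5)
Your proposal is correct and follows essentially the same route as the paper's (sketch) proof: the ideal functionality is the clear-text noisy sum, whose $\epsilon$-differential privacy rests on the unit sensitivity of the bin-sum, the infinite-divisibility decomposition of the two-sided geometric noise over the $|\mathcal{P}|-\tau$ honest shares, and the post-processing property for the surplus shares, while computational indistinguishability rests on semantic security and on threshold decryption with $T>\tau$. The only difference is one of rigor: you make the simulation explicit (encryptions of zero, simulated partial decryptions, hybrid argument), whereas the paper asserts the indistinguishability directly from the composition of the semantically secure scheme with the differentially private function.
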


\begin{proof}\emph{(sketch)} %
  First, any skill in a profile of a participating worker is first
  summed up locally with a noise-share, and then encrypted before
  being sent to the platform. We require the encryption scheme to
  satisfy semantic security, which means that no
  computationally-bounded adversary can gain significant knowledge
  about the data that is encrypted. In other words, the leak due to
  communicating an encrypted data is negligible. Second, the
  homomorphically-encrypted additions performed by the platform do not
  disclose any additional information. Third, the result of the
  encrypted addition is decrypted by combining $T>\tau$ partial
  decryptions, where each partial decryption is performed by a
  distinct worker. The threshold decryption property of the encryption
  scheme guarantees that no coalition of participants smaller than
  $\tau$ can decrypt an encrypted value, and the honest-but-curious
  behaviour of participants guarantees that no other result but the final one will be decrypted (\emph{e.g.} the platform does not ask for a decryption of a value that would not have been sufficiently perturbed).
  The final sum consists in the sum of all private values, to which
  are added $|\mathcal{P}|$ noise-shares. These shares are computed
  such that the addition of $|\mathcal{P}|-\tau$ shares is enough to
  satisfy $\epsilon$-differential privacy. Thanks to the
  post-processing property of differential privacy, adding noise to a
  value generated by a differentially-private function does not impact
  the privacy level. The addition of $\tau$ additional noise-shares
  consequently allow to resist against coalitions of at most $\tau$
  participants without thwarting privacy. As a result, since the
  privacy-preserving sum algorithm is the composition of a
  semantically secure encryption scheme with an
  $\epsilon$-differentially private function, it is computationally
  indistinguishable from a pure differentially private function, and
  consequently satisfies \texttt{$\epsilon_\kappa$-SIM-CDP} privacy
  against coalitions of up to $\tau$ participants.
    %
\end{proof}

\begin{theorem}[Security of the \texttt{PKD} algorithm] %
  \label{thm:kd-secure}
  The \texttt{PKD} algorithm satisfies
  \texttt{$\epsilon_\kappa$-SIM-CDP} privacy against coalitions of up
  to $\tau$ participants. %
\end{theorem}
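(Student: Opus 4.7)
\emph{(proposal)}
The plan is to decompose the \texttt{PKD} algorithm into its atomic data-dependent operations, invoke Theorem~\ref{thm:sum-secure} on each, and then aggregate the local guarantees by composition. First, I would observe that the only steps in Algorithm~\ref{algo:priv_kdtree} that access raw worker profiles are calls to the privacy-preserving sum subroutine: at every internal node of the tree, one histogram is computed (which consists of $l$ parallel private sums used to estimate a median with budget $\epsilon^m_i$), and at every node, one private sum is used to approximate its count of workers (with budget $\epsilon^c_i$). All further operations executed by the platform, namely the homomorphic additions, the threshold combination, the median estimation from Equation~\ref{eq:median}, and the constrained-inference post-processing, are deterministic transformations of already-disclosed (perturbed and/or encrypted) values and therefore do not consume additional privacy budget.

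Next, I would account for the total budget. For each private sum invoked, Theorem~\ref{thm:sum-secure} gives $\epsilon_i$-\texttt{SIM-CDP} privacy against coalitions of up to $\tau$ participants. At a fixed depth $i$ of the tree, the sibling sub-computations operate on disjoint subsets of the worker population (the partitions produced by the previous splits), so the parallel composition part of Theorem~\ref{the:composition} applies and the effective cost at depth $i$ remains $\epsilon^m_i$ for median work and $\epsilon^c_i$ for count work. Across the $h$ depths, and between the median-stream and the count-stream applied to overlapping worker sets, sequential composition applies: the cumulative budget is $\sum_{i=0}^{h} \epsilon^m_i + \sum_{i=0}^{h} \epsilon^c_i$, which by the allocation chosen in Section~\ref{sec:distrib_epsilon} sums to $\epsilon^m + \epsilon^c = \epsilon$.

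I would then lift these information-theoretic composition statements to the computational setting. The \texttt{SIM-CDP} definition (Definition~\ref{def:simcdp}) asks for a simulating $\epsilon$-differentially private function whose output is computationally indistinguishable from that of the real mechanism. Each private sum already comes with a simulator (its ideal noisy sum) by Theorem~\ref{thm:sum-secure}; stitching these simulators together according to the execution graph of Algorithm~\ref{algo:priv_kdtree} produces a single simulator for the whole protocol. This composed simulator is exactly a tree of noisy histograms and noisy counts, which is $\epsilon$-differentially private by Theorem~\ref{the:composition}, and a standard hybrid argument over the polynomially many encryption/decryption steps shows that the real transcript remains computationally indistinguishable from the simulated one up to a negligible function of $\kappa$. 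Applying the deterministic post-processing on top preserves both the differential-privacy guarantee of the simulator and the indistinguishability gap, yielding $\epsilon_\kappa$-\texttt{SIM-CDP} for \texttt{PKD} against coalitions of up to $\tau$ participants.

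The main obstacle I expect is the hybrid argument that justifies moving from per-sum computational indistinguishability to whole-protocol indistinguishability: the number of ciphertexts grows as in Table~\ref{tab:cost}, so one must verify that this count remains polynomial in $\kappa$ (it is, since $|\mathcal{P}|$, $T$, $h$, $l$ are fixed parameters of the protocol) and therefore the accumulated negligible gaps remain negligible. A secondary care point is to make fully explicit that sibling computations at the same depth act on disjoint worker subsets; this follows because each worker's encrypted contribution to a sibling partition is, by construction, an encryption of $0 + \nu$ rather than of her actual bin value, but the argument must be stated carefully to invoke parallel composition rigorously rather than sequentially summing budgets across the whole tree.
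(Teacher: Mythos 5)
Your proposal is correct and follows essentially the same route as the paper's own (much terser) proof sketch: reduce every data-dependent step of \texttt{PKD} to an invocation of the privacy-preserving sum (Theorem~\ref{thm:sum-secure}), then conclude via composability (Theorem~\ref{the:composition}) and the budget allocation of Section~\ref{sec:distrib_epsilon}. Your added care about parallel composition across siblings, the hybrid argument over polynomially many ciphertexts, and post-processing are all details the paper leaves implicit, and they strengthen rather than change the argument.
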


\begin{proof} \emph{(sketch)} %
  In the \texttt{PKD} algorithm, any collected information is
  collected through the \texttt{PrivMed} algorithm based on the
  privacy-preserving sum algorithm. Since (1) the privacy-preserving
  sum algorithm satisfies \texttt{$\epsilon_\kappa$-SIM-CDP} (see
  Theorem~\ref{thm:sum-secure}) against coalitions of up to $\tau$
  participants, (2) \texttt{$\epsilon_\kappa$-SIM-CDP} is composable
  (see Theorem~\ref{the:composition}), and (3) the privacy budget
  distribution is such that the total consumption does not exceed
  $\epsilon$ (see Section~\ref{sec:distrib_epsilon}), it follows
  directly that the \texttt{PKD} algorithm satisfies
  \texttt{$\epsilon_\kappa$-SIM-CDP} against coalitions of up to
  $\tau$ participants.
\end{proof}


%

\section{Privacy-Preserving Task Assignement}
\label{sec:delivery}
Once the design of a task is over, it must be assigned to relevant
workers and delivered. Performing that while satisfying differential
privacy and at the same time minimizing the number of downloads of the
task's content is surprisingly challenging. We already discarded in
Section~\ref{sec:intro}, for efficiency reasons, the
\emph{spamming} approach in which each task is delivered to all
workers.
More elaborate approaches could try to let the platform
filter out irrelevant workers based on the partitioned space output
by the \texttt{PKD} algorithm (see the Section~\ref{sec:pkd}). The partitioned
space would be used as an index over workers in addition to its
primary task design usage. For example, workers could subscribe to
their areas of interest (\emph{e.g.,} by sending an email address to
the platform together with the area of interest) and each task would
be delivered to a small subset of workers only according to its
metadata and to the workers' subscriptions. However, despite their
appealing simplicity, these \emph{platform-filtering} approaches
disclose unperturbed information about the number of workers per area,
which breaks differential privacy, and fixing the leak seems hard
(\emph{e.g.,} random additions/deletions of subscriptions, by
distributed workers, such that differential privacy is satisfied and
the overhead remains low).

We propose an alternative approach, based on Private Information
Retrieval (PIR) techniques, to diminish the cost of download on the
workers side, while preserving our privacy guarantees.

\subsection{PIR for Crowdsourcing: challenges and naive approaches}\label{sec:PIR}\label{subsec:challengesPIR}%

The main challenge in applying PIR in our context consists in designing a PIR-library such that no information is disclosed during the retrieval of information, and performance is affordable in real-life scenarios.
To help apprehending these two issues, we here present two naive methods that break these conditions and show two extreme uses of PIR: one efficient but unsecure, the other is secure but unefficient.

A first PIR-based approach could consist in performing
straightforwardly a PIR protocol between the workers and the platform, while considering the PIR-library as the set of tasks itself.
The platform maintains a key-value map that stores the complete set of tasks (the values, bitstrings required to perform the tasks) together with a unique identifier per task (the keys), together with their metadata.
The workers download the complete list of tasks identifiers and metadata, select locally the identifiers
associated to the metadata that match with their profiles, and launch one
\texttt{PIR-get} function on each of the selected
identifiers.
However, this naive approach leads to blatant security
issues through the number of calls to the \texttt{PIR-get}
function.
Indeed, in some cases, the platform could deduce the precise number of workers within a specific subspace of the space of skills: with the knowledge of the number of downloads for each worker\footnote{Even if the identity of workers is not directly revealed, it is possible to match downloads together to break \emph{unlinkability} and deduce these downloads come from the same individual, for example by using the time of downloads, cookies or other identification techniques}, it is possible to deduce, for each $k$, the number of workers downloading $k$ tasks.
From that, the platform can deduce that the number of workers located in subspaces where $k$ tasks intersect together, and therefore precise information on their skills. This information, kept undisclosed thanks to the \texttt{PKD} algorithm, breaks differential privacy guarantees.

A secure but still naive approach could be to consider the power set of the set of tasks as the PIR-library, with padding to all file such that they are all the same size (in bits).
After this, a worker chooses the PIR-object corresponding to the set of tasks she intersects with, and uses \texttt{PIR-get} on it.
Although this method prevents the previously observed breach to appear (all behaviours are identical to the platform since everyone downloads exactly one PIR-object, and all PIR-objects are of the same size), this method would lead to extremely poor results: as every object of the library is padded to the biggest one, and the biggest set of the super set of tasks is the set of tasks itself, this algorithm is even worse than the spamming approach (everyone downloads at least as much as the sum of all tasks, with computation overheads).

These two naive uses of PIR illustrate two extreme cases: the first one shows that using PIR is not sufficient to ensure privacy, and the second one illustrates that a naive secure use can lead to higher computation costs than the spamming approach.
In the following, we introduce a method to regroup tasks together, such that each worker downloads the same number of PIR items (to achieve security), while mitigating performance issues by making these groups of tasks as small as possible.

\subsection{PIR partitioned packing}


The security issue showed in the naive PIR use comes from the fact that the number of downloads directly depends on the profiles of workers.
Indeed, as the platform has access to the number of downloads, this link leaks information about workers' skills.
In order to break this link, we propose to ensure that each worker downloads the same number of items, whatever their profile is.
For simplicity, we fix this number to $1$\footnote{In general, more files can be downloaded at each worker session, but this does not impact
  significantly the overall amount of computation and does not impact
  at all the minimum download size for workers.}, and call \emph{packing} a PIR library that allows each worker to retrieve all their tasks with only one item, and \emph{bucket} an item of such a library, as seen in Definition~\ref{def:packing}. We prove in Theorem~\ref{proof:sec_packing} that any packing fulfills our security model.

We can now formalize the conditions that a PIR library must fulfill in order to both satisfy privacy and allow any worker to download all the tasks she matches with.

\begin{definition}[Packing, Bucket]\label{def:packing}

A packing $L$ is a PIR library which fulfills the following conditions:
\begin{enumerate}
\item \label{enum:number1}
	\textbf{Security condition} Each worker downloads the same number of buckets. This number is set to $1$.
\item \label{enum:packcondsize}
  \textbf{PIR requirement} Each PIR item has the same size in bits (padding is allowed): $$\forall b_1, b_2 \in L, ||b_1|| = ||b_2||$$
  This condition comes from the use of PIR.
\item \label{enum:packcondutil}
  \textbf{Availability condition} For all points in the space of skills, there has to be at least one item containing all tasks matching with this point. In other words, no matter their skills (position in the space), each worker can find a bucket that provides every task they match with.
\end{enumerate}
A \textbf{bucket} $b \in L$ is an item of a packing.
We note $|b|$ for the number of tasks contained in the bucket $b$, and $t \in b$ the fact that a task $t$ is included in bucket $b$.
The size in bits of a bucket $b$ is denoted as $||b||$.
\end{definition}

\begin{theorem}\label{proof:sec_packing}
  The use of PIR with libraries which fulfill the packing conditions satisfy \texttt{$\epsilon_\kappa$-SIM\--\-CDP} privacy against coalitions of up to $\tau$ participants.
\end{theorem}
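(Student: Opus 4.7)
The plan is to show that the entire view of any honest-but-curious adversary (platform or coalition of up to $\tau$ workers) during a round of packing-based PIR retrieval is computationally indistinguishable from a view that is completely independent of the worker profiles. Since a function whose output does not depend on its input trivially satisfies $\epsilon$-differential privacy for every $\epsilon > 0$, \texttt{$\epsilon_\kappa$-SIM-CDP} then follows directly from Definition~\ref{def:simcdp} by taking $\mathtt{F}_\kappa$ to be such a constant function.

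First, I would enumerate precisely the information observable by the adversary during the retrieval phase: (a) the number of \texttt{PIR-get} calls issued by each worker, (b) the bit-length of each PIR query and response, and (c) the ciphertexts exchanged during the PIR protocol. Packing condition~\ref{enum:number1} fixes (a) to exactly one call per worker, independent of the worker profile. Packing condition~\ref{enum:packcondsize} fixes the size of every bucket in the library, hence the PIR response size is constant, and the PIR query size depends only on the number of buckets, which is public. Thus (a) and (b) are determined entirely by public parameters and leak nothing about individual profiles.

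Next, for (c), I would invoke the semantic security of the additively-homomorphic cryptosystem underlying XPIR. Each worker's PIR query is a vector of encryptions of a single $1$ and many $0$s under the worker's own public key; a standard hybrid argument over the ciphertexts shows that no probabilistic polynomial-time adversary can distinguish the query targeting a particular bucket id from a query targeting any other bucket id with more than negligible advantage. Because each worker holds her own secret key for the PIR session, a coalition of up to $\tau$ other participants has no key material relevant to a target worker's query, and semantic security applies unconditionally against them. The homomorphic computation of the response by the platform is deterministic on ciphertexts and discloses no additional information.

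Combining these three points, there exists a simulator that produces a transcript statistically identical in (a),(b) and computationally indistinguishable in (c) from the real protocol, while using no profile data. Formally, let $\mathtt{F}_\kappa$ output this simulated transcript: $\mathtt{F}_\kappa$ is independent of $\mathcal{P}$ and therefore satisfies $\epsilon$-differential privacy for any $\epsilon$, and the hybrid argument above gives $|\Pr[\mathtt{A}_\kappa(\mathtt{f}_\kappa(\mathcal{P},\zeta_\kappa))=1] - \Pr[\mathtt{A}_\kappa(\mathtt{F}_\kappa(\mathcal{P},\zeta_\kappa))=1]| \leq negl(\kappa)$. The main obstacle is making precise the role of packing condition~\ref{enum:packcondutil} (availability): this condition is not needed for privacy per se, but must be invoked to guarantee that the single download each worker performs is a well-defined, profile-determined choice within the library rather than an abort that would itself leak information; I would emphasize that condition~\ref{enum:packcondutil} ensures every profile can be handled by exactly one bucket so the protocol never deviates from the pattern analyzed in (a)--(c).
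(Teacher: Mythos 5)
Your proof is correct and follows essentially the same route as the paper's: the paper likewise argues that (1) XPIR's computational security covers the ciphertext contents and (2) the packing conditions make the download pattern (one fixed-size \texttt{PIR-get} per worker) independent of profiles, so nothing beyond the already-known worker count can leak. Your version is simply more explicit --- spelling out the simulator $\mathtt{F}_\kappa$, the hybrid argument, and the role of the availability condition --- where the paper gives only a sketch.
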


\begin{proof}\emph{(sketch)}
In order to prove the security of packing, we observe
that (1) the XPIR protocol has been proven computationally secure in~\cite{xpir}, such that it satisfies $\epsilon_\kappa$\texttt{-SIM-CDP}, and
(2) the use of packing prevents any sensitive information on workers to leak through the number of downloads.
Indeed, Condition~\ref{def:packing}.\ref{enum:number1} (security) makes each worker call the \texttt{PIR-get} function only once, such that the behaviours of any two workers are indistinguishable.
Therefore, the number of \texttt{PIR-get} calls does not depend on profiles.
More precisely, the number of \texttt{PIR-get} can only leak information on the number of workers (which is bigger than or equal to the number of \texttt{PIR-get} calls), which does not depend on their profiles, and is already known by the platform.

\end{proof}

Before considering how to design an efficient packing scheme, we highlight a few noticeable implications of these conditions.
First, due to Condition~\ref{def:packing}.\ref{enum:packcondutil} (availability), any worker is matched with at least one bucket.
To simplify this model, we propose to focus on a specific kind of packings, that can be seen as a partitioning of the space, where each bucket can be linked to a specific subspace, and where all points are included in at least one of such a subspace.
We call \emph{partitioned packing} such a packing (Definition~\ref{def:partitioned}).

\begin{definition}[Partitioned Packing]\label{def:partitioned}
A partitioned packing is a packing that fulfills the following conditions:
\begin{enumerate}
	\item \label{cond:part1} Each bucket is associated with a subspace of the space of skills.
	\item \label{cond:part2} A bucket contains exactly the tasks that intersect with the subspace it is associated with (this means that all workers in this subspace will find at least the task they match with in the bucket)
	\item
	\label{cond:part3}
	Subspaces associated with the buckets cover the whole space (from Condition~\ref{def:packing}.\ref{enum:packcondutil} (availability)).
	\item
	\label{cond:part4}
		Subspaces associated with the buckets do not intersect each other
\end{enumerate}
\end{definition}


In the following, we will focus on partitioned packing.
However, in order not to lose generality, we first prove that these packings do not impact efficiency.
Indeed, when it comes to the design of a PIR library, efficiency can be affected by two main issues: the number of items and the size of the largest item (in our case, bucket) impact the communication costs, while the size of the overall library impacts the computation time on the platform.
Note that the size of the overall library is equal to the product of the size of items by their number.
We show in Theorem~\ref{thm:packing_partition} that with any packing, we can build a partitioned packing that is equivalent or better.

To prove this theorem, we introduce a specific kind of packing that we call \emph{consistent packing}, defined in Definition~\ref{def:consist}. Essentially, a consistent packing is a packing where no useless task is added to any bucket: in all buckets $b$, all tasks match with at least one point (a possible worker profile) which has all her tasks in the bucket $b$. As a result, a consistent packing avoids cases where tasks are in a bucket, but no worker would download it as the bucket does not match with all their needs.

\begin{definition}[Consistent packing]\label{def:consist}
A packing $P$ is called consistent if and only if, for all buckets $b \in P$, for all tasks $t\in b$, there exists at least one point $w$ in the subspace of $t$ such that all tasks matching with $w$ are in $b$:
$$\forall b \in P, \forall t \in b, \exists w, (match(w, t) \wedge \forall t' \in T, match(w, t') \Rightarrow t' \in b ) $$
\end{definition}

\begin{theorem}\label{thm:packing_partition}
  For any packing $P$ of tasks, there exists a partitioned packing that either has the same size of buckets, number of buckets, or smaller ones.
\end{theorem}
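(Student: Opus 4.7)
The plan is to prove the statement by two successive reductions: first, transform any packing into a \emph{consistent} packing (Definition~\ref{def:consist}) without worsening any of the three metrics; second, transform any consistent packing into a partitioned packing, again without worsening the metrics. The three metrics of interest (number of buckets, maximum bucket size, total library size) all decrease monotonically under both transformations.

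For the first reduction, given a packing $P$, I would obtain a consistent packing $P'$ by removing from each bucket $b$ every task $t$ that fails the consistency witness condition, i.e., every $t\in b$ for which no $w$ matching $t$ has all its matching tasks present in $b$. This deletion only shrinks buckets and never decreases the number of buckets. Availability is preserved: any worker $w$ that previously relied on $b$ (i.e., all tasks matching $w$ were in $b$) is untouched, because every such task $t$ admits $w$ itself as a witness and hence is not removed. Thus $P'$ is a packing that is moreover consistent, with bucket sizes and bucket count no larger than in $P$.

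For the second reduction, starting from the consistent packing $P'$, I would build a partitioned packing as follows. By the availability condition, for every point $w$ in the space of skills the set $\mathcal{B}(w) := \{b \in P' : \text{all tasks matching } w \text{ are in } b\}$ is nonempty; pick a selector $b(w)\in\mathcal{B}(w)$. For each bucket $b\in P'$, define $S_b := \{w : b(w)=b\}$. By construction the $\{S_b\}_{b\in P'}$ partition the full space of skills, so Conditions~\ref{def:partitioned}.\ref{cond:part3} and \ref{def:partitioned}.\ref{cond:part4} hold. Now define the new bucket $b'$ to be exactly the set of tasks whose metadata intersects $S_b$, which enforces Conditions~\ref{def:partitioned}.\ref{cond:part1} and \ref{def:partitioned}.\ref{cond:part2}. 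Pad all $b'$ to the maximum bucket size in bits to recover the PIR requirement, and discard any bucket with $S_b=\emptyset$.

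It remains to check the three metrics and the security/availability of the construction, which is where consistency does the work. For each $b$, I would show $b'\subseteq b$: if $t\in b'$, then $t$'s metadata intersects $S_b$, so there is some $w\in S_b$ matching $t$; since $b(w)=b$ means every task matching $w$ is already in $b$, we get $t\in b$. Hence bucket sizes cannot grow, the number of buckets cannot grow (only drop when $S_b$ is empty), and therefore the total library size cannot grow either. Availability for each $w$ is preserved by symmetry: every task $t$ matching $w\in S_b$ has $w$ as a witness that $t$'s metadata intersects $S_b$, so $t\in b'$. The main obstacle is precisely this last step — verifying that after redefining bucket contents purely in terms of the chosen subspaces, no worker loses a matching task — and it is exactly the consistency property from the first reduction that makes the argument go through, which is why the two reductions must be chained in this order.
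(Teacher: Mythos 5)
Your proof is correct, and its overall architecture matches the paper's: both arguments pass through the notion of a consistent packing (Definition~\ref{def:consist}) and both then associate a subspace to each bucket so that Conditions~\ref{def:partitioned}.\ref{cond:part1}--\ref{def:partitioned}.\ref{cond:part4} hold without increasing bucket sizes or counts. Where you genuinely diverge is in the second step. The paper builds, for each bucket $b$, a set $V$ consisting of one witness point per task of $b$, argues that this yields Condition~\ref{def:partitioned}.\ref{cond:part2}, and then separately patches coverage (extending subspaces to swallow uncovered points via availability) and disjointness (shrinking overlapping subspaces). Your selector construction --- choose $b(w)\in\mathcal{B}(w)$ for every point $w$ and set $S_b=\{w: b(w)=b\}$ --- delivers coverage and disjointness for free, and the two inclusions $b'\subseteq b$ (via the definition of $\mathcal{B}(w)$) and ``every task matching $w\in S_b$ lies in $b'$'' (via the definition of $b'$) give Condition~\ref{def:partitioned}.\ref{cond:part2} and availability in one stroke. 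This is tighter and more verifiable than the paper's extend-and-shrink argument, at the cost of an implicit appeal to a choice function over a continuous domain (harmless here, and avoidable by ordering the buckets and always selecting the first element of $\mathcal{B}(w)$).

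One remark: your closing claim that consistency is ``exactly'' what makes the second reduction go through is not accurate. Both inclusions in your second step use only the definition of $\mathcal{B}(w)$ and Condition~\ref{def:packing}.\ref{enum:packcondutil}; nowhere do you invoke the witness property of Definition~\ref{def:consist}. Your second reduction therefore works on an arbitrary packing, and the first reduction is a harmless optimization (it can only shrink buckets further) rather than a logical prerequisite. This does not affect correctness, but the stated dependency between the two reductions is spurious.
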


\begin{proof}\emph{(sketch)}
  Let $P$ be a packing of tasks that is not partitioned.
  To prove that a partitioned packing can be created that is more efficient than $P$, we distinguish two cases.
  First, we consider that each bucket of $P$ can cover a subspace, containing exactly the tasks that intersect with that subspace (thus fulfilling Conditions~\ref{def:partitioned}.\ref{cond:part1} and~\ref{def:partitioned}.\ref{cond:part2}).
  Then, we prove that any consistent packing (as in Definition~\ref{def:consist}) fulfills Condition~\ref{def:partitioned}.\ref{cond:part2}.
  After that, we consider the case where $P$ does not fulfill this condition, and create a new, smaller packing $P_f$ from $P$ that is consistent, and therefore fulfills Condition~\ref{def:partitioned}.\ref{cond:part2}, and use previous results.

  We first consider the case where all buckets of $P$ can cover a subspace while fulfilling Condition~\ref{def:partitioned}.\ref{cond:part2}, meaning that each bucket contains exactly the tasks that intersect with the subspace it covers.
  In that case, Condition~\ref{def:partitioned}.\ref{cond:part1} is trivially fulfilled.
If Condition~\ref{def:partitioned}.\ref{cond:part3} is not fulfilled, this means that there is at least a subspace that is not covered by the packing $P$.
Let $w$ be a point in such a subspace.
Since $P$ is a packing, the Condition~\ref{def:packing}.\ref{enum:packcondutil} (availability) makes it possible to match with any point of the space with at least one bucket. In particular, $w$ can be matched with a bucket $b$. It is enough to extend the subspace associated with $b$ such that it includes $w$ (note that this extension does not break Condition~\ref{def:partitioned}.\ref{cond:part2}.
We can proceed that way for any point (or more likely any subspace) that is not covered by a subspace, to associate subspaces to a bucket of $P$, such that this matching fulfills Conditions~\ref{def:partitioned}.\ref{cond:part1}, \ref{def:partitioned}.\ref{cond:part2} and \ref{def:partitioned}.\ref{cond:part3}.
If, in this matching, two subspaces associated with buckets of $P$ intersect, it is trivial to reduce one of them to fulfill Condition~\ref{def:partitioned}.\ref{cond:part4} too.
Therefore, if all buckets of $P$ can be matched with a subspace while fulfilling Condition~\ref{def:partitioned}.\ref{cond:part2}, the theorem holds, since $P$ is equivalent to a partitioned packing.


It can be noticed that if a packing is consistent (Definition~\ref{def:consist}), Condition~\ref{def:partitioned}.\ref{cond:part2} is fulfilled.
Indeed, if $\forall b \in P, \forall t \in b, \exists w, (match(w, t) \wedge \forall t' \in T, match(w, t') \Rightarrow t' \in b ) $ (Definition~\ref{def:consist}), then, for all $b$ in $P$, we can take $V$ as the union of the $|b|$ points $w$ described in the equation, one for each task $t$ in $b$.
In that case, Condition~\ref{def:partitioned}.\ref{cond:part2} is fulfilled: for each bucket $b$ and its associated subspace $V$, all tasks in $b$ intersect with $V$ (by definition, as we took $V$ as the union of one point in each task in $b$), and $b$ contains all tasks that intersect with $V$ (again, by definition, as each task $t'$ that match with a point of $V$ are in $b$).

In other words, and using the above case, making a packing consistent is sufficient to create a partitioned packing.

We now consider the case where at least one bucket $b$ of $P$ does not cover a subspace such that Condition~\ref{def:partitioned}.\ref{cond:part2} does not stand. In particular, $P$ is not consistent.
This means that there is at least one task $t \in b$ such that for all points $w$ in the subspace of $t$, there is at least one task $t'$ with which $w$ matches and that is not contained by the bucket $b$.
In other words, no points in $t$ can be matched with the bucket $b$, as $b$ lacks at least one task for each point of $t$.
As a consequence of the Condition~\ref{def:packing}.\ref{enum:packcondutil} (availability), this means that all points in $t$ are matched with another bucket.
  Therefore, the task $t$ can be removed from bucket $b$, without breaking the properties of a packing, and without increasing the number of buckets, the minimal size of buckets.
  We proceed so, by removing all such tasks in all buckets recursively:
  this trivially ends thanks to the finite number of tasks and
  buckets.
  By construction, the final packing $P_f$ is consistent.

  Therefore, packing $P_f$ is smaller than $P$, and fulfills Condition~\ref{def:partitioned}.\ref{cond:part2}, and we proved in the first case that a packing that fulfills this condition is equivalent to a partitioned packing, so $P_f$ is equivalent to a partitioned packing.
\end{proof}

\subsection{Optimizing the packing}

With this secure partitioned packing approach, we can discuss how to optimize the overall complexity.
First, it can be noticed that Conditions~\ref{def:packing}.\ref{enum:packcondsize} and~\ref{def:packing}.\ref{enum:packcondutil} (PIR requirement and availability) set a minimal size of bucket: according to Condition~\ref{def:packing}.\ref{enum:packcondutil} (availability), there has to be a bucket containing
the largest (in bits) intersection of tasks, and Condition~\ref{def:packing}.\ref{enum:packcondsize} (PIR requirement) prevents any bucket from being smaller.
Furthermore, this minimum is reachable if we consider a packing that creates a
partition for each different intersection of tasks and pad to the
largest one.
However, by building a different bucket for all the possible intersections of tasks, this packing strategy is likely to lead to a very large number of buckets (\emph{e.g.} if a task's subspace is included in another, this packing leads to two buckets instead of one: one containing both tasks, and the other containing only the largest one as it is a different intersection), while we would like to minimize it (and not only the size of buckets).
Therefore, although this packing scheme reaches the minimal size of buckets, we cannot consider it as optimal.
However, it illustrates what we call an \emph{acceptable} packing (Definition~\ref{def:acceptable}), which will be used to define optimality: a packing in which the size of buckets is minimal.

\begin{definition}[Acceptable partitioned packing]\label{def:acceptable}

	Let $E$ be a multi-dimensional space, $T$ a set of tasks, \emph{i.e.} a set of positively weighted hyper-rectangles (the hyper-rectangle is the volume of the task, and the weight is their size in bits, denoted $w_t$ for $t \in T$) in the space $E$ and $P$ a packing of these tasks.
	We call weight of a packing $w_P$ the size in bits of a bucket in $P$ (due to Condition~\ref{def:packing}.\ref{enum:packcondsize} (PIR requirement), this size is unique).
We call weight of a point $w_p$ in $E$ the sum of the weights of all tasks in $T$ which match with $p$.

We call minimal weight $m_T$ of the set of tasks $T$ the maximum weight of a point in $E$: it is the maximum size a worker could require to download.
A partitioning $P$ is called acceptable for $T$ if the size of $P$ is equal to $m_T$: $m_T = w_P$.
\end{definition}

\subsubsection{NP-hardness of optimal packing}
To define optimality, we take this minimum size of buckets, but also try to minimize the number of buckets (or in an equivalent way, the size of the PIR library), as expressed in Definition~\ref{def:optimal}.

\begin{definition}[Optimal partitioned packing]\label{def:optimal}
	For a set of tasks $T$, we call optimal packing an acceptable packing that minimizes the number of buckets.
\end{definition}

However, we prove in Theorem~\ref{the:np-hard} that determining whether there exists an acceptable packing of size $n$ is NP-hard, and therefore, finding the optimal partitioned packing is also NP-hard.

\begin{theorem}\label{the:np-hard}

Given a set of tasks $T$, it is NP-hard in $|T|$ to determine whether there exists an acceptable partitioning of $n$ buckets.
We call $\mathcal{P}(T,n)$ this problem.

\end{theorem}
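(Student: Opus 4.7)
My plan is to establish NP-hardness of $\mathcal{P}(T,n)$ by a polynomial-time reduction from \textsc{Bin Packing}. An instance of \textsc{Bin Packing} consists of item sizes $s_1, \ldots, s_k$, a bin capacity $C$, and a target number of bins $n'$; the question is whether the items can be grouped into $n'$ subsets each of total size at most $C$. Without loss of generality I may assume $\max_i s_i = C$ (otherwise I append a single dummy item of size $C$ and increment $n'$, which preserves NP-hardness since the dummy must occupy a whole bin by itself).

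From such an instance I would build an instance of $\mathcal{P}(T,n)$ in the one-dimensional space of skills $E = [0,k]$ by introducing, for each item $i$, a single task $t_i$ whose hyper-rectangle is the interval $[i-1,i]$ and whose bit-weight is $w_{t_i} = s_i$; I set $n = n'$. Because the intervals are pairwise disjoint, every point of $E$ lies in at most one task, so Definition~\ref{def:acceptable} gives $m_T = \max_i s_i = C$. The construction is clearly polynomial in $|T| = k$.

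The forward direction is immediate: given a bin packing into bins $B_1, \ldots, B_{n'}$, I assign each bin one bucket whose associated subspace is the union of the intervals of the items placed in that bin (any leftover portion of $E$ is absorbed into an arbitrary bucket, adding no task weight). By Condition~\ref{def:partitioned}.\ref{cond:part2} this bucket contains exactly the tasks of the corresponding bin, of total weight at most $C = m_T$, so the packing is acceptable and uses $n = n'$ buckets. For the backward direction, I would first observe that any acceptable packing can be modified without loss of generality so that every task lies in the subspace of a single bucket: if the interval $[i-1,i]$ of task $t_i$ is split across several buckets' subspaces, reassigning all its slices to a single one of them cannot increase any bucket's weight, thanks to the disjointness of the $t_j$'s. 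Reading off the resulting task-to-bucket assignment then directly yields an $n'$-bin packing of the $s_i$'s with capacity $C$, completing the equivalence.

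The step I expect to need the most care is justifying that the subspaces produced in the forward direction are admissible in Definition~\ref{def:partitioned} --- in particular, that a subspace may be a finite union of disjoint intervals rather than a single hyper-rectangle. The definition as stated places no such restriction, so this is straightforward; however, were one to insist on connected hyper-rectangles only, the 1-D reduction would collapse because contiguous 1-D packing is solvable in polynomial time by dynamic programming. In that eventuality I would either strengthen the reduction by starting from \textsc{3-Partition} and adding a gadget of three extra tasks of weight $B/3$ overlapping at a common point to pin $m_T = B$ and force the remaining buckets to be exact triples summing to $B$, or lift the construction to a higher-dimensional space of skills in which axis-aligned hyper-rectangles can still freely select arbitrary subsets of tasks.
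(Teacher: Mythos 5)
Your reduction is essentially the paper's own construction: pairwise-disjoint unit intervals on a line carrying the input numbers as bit-weights, plus one pinning task whose weight forces $m_T$ to equal the bucket capacity, so that acceptability turns the buckets into a capacity-bounded partition of the numbers. The only real difference is the source problem --- the paper reduces from \textsc{Partition} (fixing $n=3$ and pinning $m_T$ at $\frac{1}{2}\sum_i n_i$), you reduce from \textsc{Bin Packing} with general $n$ --- and both are valid; your backward direction (consolidating a task whose interval is split across several buckets' subspaces into a single bucket, which cannot increase any bucket's weight) is sound and in fact spelled out more carefully than in the paper. One small slip to fix: the closed intervals $[i-1,i]$ are \emph{not} pairwise disjoint, since adjacent tasks share an endpoint; a worker at an integer point then matches two tasks and $m_T$ could be as large as $\max_i (s_i+s_{i+1})$, breaking the claim $m_T=C$. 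Use half-open intervals $[i-1,i)$ as the paper does. Your remark that bucket subspaces need not be connected is correct and is implicitly relied upon by the paper's proof as well.
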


\begin{proof}\emph{(sketch)}
  To prove that this problem is NP-hard, it is
  enough to demonstrate that a certain problem $\mathcal{P}^+$ known
  to be NP-complete can be polynomially reduced to $\mathcal{P}$.

  We recall that the Partition Problem is NP-complete (see~\cite{karmarkar1982difierencing}):
  $\mathcal{P}^+(S)$: given a multiset $S$ of $N$ positive integers $n_i, i \in [0,N-1]$,
  decide whether this multiset can be divided into two submultisets $S_1$ and $S_2$
  such that the sum of the numbers in $S_1$ equals the sum of the
  numbers in $S_2$, and the union of $S_1$ and $S_2$ is included in $S$.

  Let us consider a multiset $S$ and the problem $\mathcal{P}^+(S)$. We assume
  the existence of a deterministic algorithm $A$ that solves
  $\mathcal{P}(T,n)$ in a polynomial time in $|T|$.
  We first distinguish a trivial case where the problem $\mathcal{P}^+(S)$ can be solved in polynomial time.
  Then, we build an algorithm that uses $\mathcal{P}(T,3)$ to solve $\mathcal{P}^+(S)$ in
  polynomial time similarly to the remaining cases, which leads to a contradiction.

  We first consider a trivial case: if there exists $n_k$ in $S$ such
  that $n_k > \sum_{i\in[0,N-1], i \neq k} n_i$, then we return $False$.
  Deciding whether $S$ falls in that specific case is linear in $|S|$, and so is the computation of the answer. %
  If not, let $E$ be a one dimensional space, with bounds $[0, |S|+1[$.
  We build $T$ as a set of $|S|+1$ tasks ($T = \{t_i, i\in [0, N]\}$), such that no task intersects with each other: therefore, the minimum size $m_{T}$ of $T$ (from Definition~\ref{def:acceptable}) will be the same as the size of the biggest task $t$ in $T$.
  The $|S|$ first tasks are all associated with an element of $S$, while the last one will be used to fix $m_{T}$.
  More precisely, we build $T$ as follows:
  \begin{itemize}
  \item the range of $t_i$ is $[i, i+1[$
  \item for $i \neq N$, the weight of $t_i$ is equal to the value of $n_i$ ; $w_{t_N} = \frac{\sum_{i\in[0,N-1]} n_i}{2}$.
  \end{itemize}
  Building $T$ and $t_{max}$ is subpolynomial.

  By hypothesis, $\forall k, n_k \leq \sum_{i \in [0,N-1], i \neq k} n_i$ (as we dealt with this case previously), and by construction, no hyper-rectangle intersects any other, so the minimal weight is the size of the biggest task, which is the last one: $m_T = max_{t_i}(w_{t_i}) = w_{t_{max}}$.
  Therefore, if $S$ can be divided into two submultisets $S_1$ and $S_2$ of the same size, this size is $\frac{\sum_{i\in[0,N-1]} n_i}{2}$, and $\mathcal{P}(T,3)$ answers $True$.

  Reciprocally, if $\mathcal{P}(T,3)$ answers $True$, this means that there exists a packing of size $3$ such that no packing is bigger than $m_T = \frac{\sum_{i\in[0,N-1]} n_i}{2}$.
  In particular, as $w_{t_N} = m_T$, this means that no task is added to the bucket containing it, and that the two remaining buckets contain all tasks $t_i, i\neq N$.
  If one of these buckets where smaller than $m_T$, the other would be bigger than $m_T$ (as $m_T = \frac{\sum_{i\in[0,N-1]} n_i}{2}$), and therefore, both buckets weight exactly $m_T$.
  Therefore, it is possible to separate $S$ in $S_1$ and $S_2$ such that the sum of the numbers in $S_1$ equals the sum of the $S_2$ by taking all the elements corresponding to the tasks in the first bucket for $S_1$, and the elements corresponding to the second bucket for $S_2$.

  Therefore, if we are not in the trivial case treated above, $\mathcal{P}(T,3)$ answers $True$ if and only if $\mathcal{P}^+(S)$ is true in polynomial time.
  As both deciding whether we are in that trivial case and computing the answer in that trivial case can be computed in polynomial time, an algorithm deciding $\mathcal{P}^+(S)$ in polynomial time can be built.
  The assumption of $\mathcal{P}(T,n)$ not being NP-hard leads to a
  polynomial algorithm solving $\mathcal{P}^+$, which is absurd, so
  $\mathcal{P}(T,n)$ is NP-hard.



\end{proof}

\subsubsection{Static packings}
Another point can be highlighted: the difference between what we call \emph{static} packing and \emph{dynamic} partitioning.
Indeed, when trying to optimize the use of partitioned buckets, two main approaches can be used: adapt buckets to tasks, or adapt tasks to buckets.
In the first case, we consider a fixed set of tasks, and try to build partitions in order to minimize the cost of PIR. On the one hand, this optimization makes it possible to perform the best with any set of tasks. On the other hand, as we consider a fixed set of tasks, we may have to compute a new partitioning when this set evolves (when a task is added or removed, at least when it affects the largest bucket).
In the second case however, we consider a fixed partitioning, that is independent from the set of tasks.
This method is more likely to be suboptimal, but it avoids heavy computation of optimal packing and allows a greater flexibility in the context of crowdsourcing, by allowing a large variety of choices and policies from the platform, which can even lead to other kinds of optimization.
For instance, it allows the platform to manage prices policies (\emph{e.g.} making tasks pay for each targeted subspace, higher prices for tasks willing to target highly demanded subspaces, etc.), in order to even the load within the whole space, and to reduce the redundancy of tasks within the PIR library (tasks that target more than one partition).


As finding the optimal is NP-hard, we prefer to set aside dynamic packings, as its main asset is the theoretical possibility to reach optimality while remaining unrealistic in a real-life scenario, and focus instead on static packings.



Static packing means that the design of partitions is independent of
tasks: the tasks contained within the bucket may change, but not the
subspace delimited by the partition. These heuristic packing schemes
are not optimal in general but may be affordable in real-life
scenarios. We propose to use a simple heuristic static packing scheme, the \texttt{PKD PIR Packing},
consisting in using the partitioned space of workers profiles computed
primarily for task design purposes: to each leaf partition
corresponds a bucket containing all the tasks that have metadata
intersecting with it (possibly with padding).
The resulting algorithm is presented in Algorithm~\ref{algo:heur_pir}.
The accordance of this scheme with the distribution of workers can lead to both useful and
efficient buckets (as assessed experimentally, see
Section~\ref{sec:expe}), and the stability over time of the space
partitioning (static approach) makes it easier to design policies to
approach optimality through incentives on the task design (rather than
through the bucket design).

\begin{algorithm}[tbhp]
  \KwData{\\
	  $T$ a Tree computed with the \texttt{PKD} algorithm \\
	  $\mathcal{T}$ a list of tasks\\
  }%
  \KwResult { %
    $P$: A static partitioned packing depending on workers distribution
  }%

  Create an empty packing $P$\\
  \For{all leaves $l$ of the tree $T$}{
	  Create a new empty bucket $b$, assigned to the subvolume of $l$\\
	  \For{all tasks $t$ in $\mathcal{T}$}{
		  \If{$t$ and $l$ intersect}{
		  Add $t$ to the bucket $b$
	  }
	  Add $b$ to $P$
	  }
  }

  \Return{Packing $P$}

  \caption{\texttt{PKD PIR Packing}}%
  \label{algo:heur_pir}
\end{algorithm}

\section{Experimental Validation}%
\label{sec:expe}%
We performed a thorough experimental evaluation of the quality
and performances of both the \texttt{PKD} algorithm and our \texttt{PKD PIR Packing} heuristic (that we abbreviate as \emph{PIR} in the experiments).

\subsection{Datasets}

In this section, we introduce the datasets and data generators that are used in our experiments.

\paragraph{Realistic Dataset.}
To the best of our knowledge there does not exist any reference
dataset of worker profiles that we could use for our
experiments. This led us to building our own dataset from public open
data. The \emph{StackExchange}\footnote{\emph{StackExchange} is a set
  of online forums where users post questions and answers, and vote
  for good answers \url{https://archive.org/download/stackexchange}.}
data dumps are well-known in works related to experts finding. We
decided to use them as well in order to perform experiments on
realistic skills profiles. We computed profiles by extracting skills
from users' posts and votes. In \emph{StackExchange}, users submit
posts (questions or answers) that are tagged with descriptive keywords
(\emph{e.g.,} ``python programming'') and vote positively
(resp. negatively) for the good (resp. bad) answers. We consider then
that each user is a worker, that each tag is a skill, and that the
level of expertise of a given user on a given skill is reflected on
the votes. We favored a simple approach for computing the expertise of
users. First, for each post, we compute a \emph{popularity ratio} as
follows: $r=\mathtt{upvotes}/(\mathtt{upvotes+downvotes})$, where
\texttt{upvotes} is the number of positive votes of the post and
\texttt{downvotes} is the number of negative votes. Second, for each
user $p_i$, for each tag $j$, the aggregate level of expertise
$p_i[j]$ is simply the average popularity ratio of the posts from $p$
tagged by $j$. Note that more elaborate approaches can be used (see
the survey~\cite{Srba:2016:CSC:2988335.2934687}). Finally, we removed
the workers that do not have any skill level higher than $0$. We
applied this method on three \emph{StackExchange} datasets:
\texttt{stackoverflow.com-Posts.7z},
\texttt{stackoverflow.com-Tags.7z}, and
\texttt{stackoverflow.com-Votes.7z} which resulted in 1.3M worker
profiles\footnote{The scripts for generating our dataset are available
  online:
  \url{https://gitlab.inria.fr/crowdguard-public/data/workers-stackoverflow}}.
Figure~\ref{fig:hmwz} (a) shows for ten common skills\footnote{The ten
  common skills considered are the following: \texttt{.net},
  \texttt{html}, \texttt{javascript}, \texttt{css}, \texttt{php},
  \texttt{c}, \texttt{c\#}, \texttt{c++}, \texttt{ruby},
  \texttt{lisp}.} and for the possible levels divided in ten ranges
(\emph{i.e.,} $[0.0, 0.1[$, $[0.1,0.2[$, \ldots, $[0.9,1]$) their
corresponding frequencies. It shows essentially that whatever the
skill considered, most workers have a skill level at 0. The rest of
the distribution is not visible on this graph so we show in
Figure~\ref{fig:hmwz} (b) the same graph but \emph{excluding}, for
each tag, the workers having a skill level at
0. 

\begin{figure*}
  \begin{multicols}{2}[]
    \fbox{\begin{minipage}{1\linewidth} \centering
        \includegraphics[width=1\linewidth]{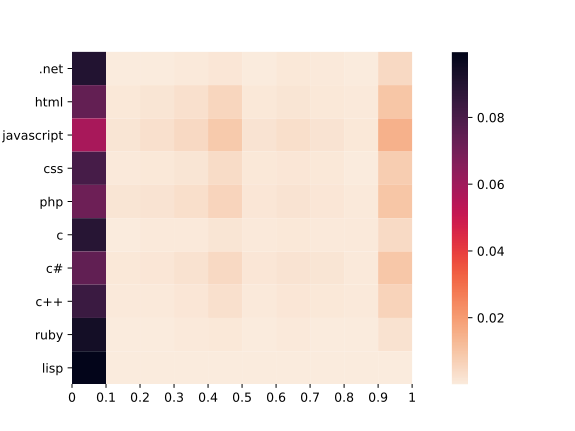} %
        a) On the x-axis: Skill level. On the y-axis: Skill. On the
        heatmap: frequency of the given skill-level on the given
        skill within workers - \emph{including} the workers having a
        skill level at 0. 
      \end{minipage}}

    \columnbreak \fbox{\begin{minipage}{1\linewidth} \centering
        \includegraphics[width=1\linewidth]{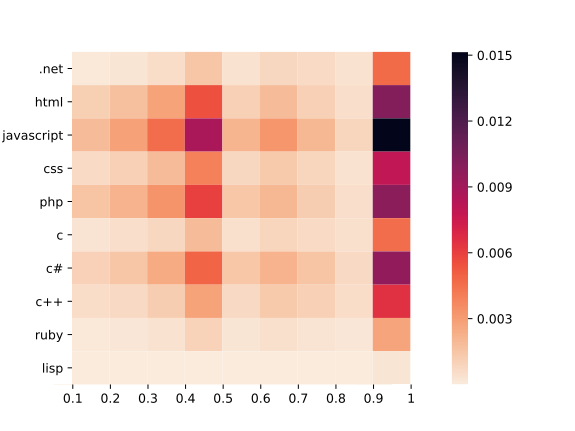} %
        b) On the x-axis: Skill level. On the y-axis: Skill. On the
        heatmap: frequency of the given skill-level on the given
        skill within workers - \emph{excluding} the workers having a
        skill level at 0. 
      \end{minipage}}
  \end{multicols}
  \vspace{-0.5cm}
  \caption{Frequencies of ten common skills within the \texttt{STACK}
    dataset.}
	\vspace{-0.5cm}
  \label{fig:hmwz}
\end{figure*}

\paragraph{Data Generators.} %
We performed our experiments over both synthetic and realistic data.
Our two synthetic generators are specifically dedicated to evaluating
the \texttt{PKD} algorithm with two different kinds of
assumptions. First, our \texttt{UNIF} synthetic data generator draws
skills uniformly at random between $0$ and $1$ (included) (1) for each
dimension of a worker's profile and (2) for each dimension of a task
(more precisely, a min value and a max value per dimension). Second,
our \texttt{ONESPE} generator considers that workers are skilled over
a single dimension and that tasks look for workers over a single
dimension.
The specialty of each worker is chosen uniformly at random, and its
value is drawn uniformly at random between $0.5$ and $1$. The other
skills are drawn uniformly at random between $0$ and $0.5$. Similarly
to workers, the specialty looked for by a task is chosen uniformly at
random as well, its min value is chosen uniformly at random between
$0.5$ and $1$ and its max value is set to $1$. The min values of the other
dimensions of a task are $0$, and their max values are chosen
uniformly at random between $0$ and $0.5$.
Although this second heuristic is obviously not perfect, it seems far more realistic than the previous one.
For the two task generation heuristics, we require that all tasks must contain at least one worker so that the mean error can be correctly computed.

Finally, our realistic data generator, called \texttt{STACK}, consists in
sampling randomly workers (by default with a uniform probability) from
the \texttt{STACK} dataset. %
For our experiments, we generated through \texttt{STACK} workers
uniformly at random and performed the \texttt{ONESPE} task generation
strategy described above. %

\subsection{\texttt{PKD} algorithm}%
\label{sec:quality}

\paragraph{Quality of the \texttt{PKD} algorithm.}

For our experiments, we implemented the \texttt{PKD} algorithm in Python~3 and run our
experimental evaluation on commodity hardware (Linux OS, 8GB RAM, dual
core 2.4GHz). In our experiments, each measure is performed $5$ times
(the bars in our graphs stand for confidence interval), $1k$ tasks, $10$ dimensions, and $\tau = 1$. %

In Fig.~\ref{fig:qualite_1} (a), we fix the privacy budget to
$\epsilon=0.1$, the number of bins to $10$, and the number of workers to $10k$, and we study the impact
of the depth of the tree on the quality.
\texttt{UNIF} achieves the lowest error, as long as the tree is not too deep. This can be explained by the uniform distribution used in the generation method, which matches the uniform assumption within leaves in the tree. When the depth (and the number of leaves) grows, this assumption matters less and less.
\texttt{ONESPE} is more challenging for the \texttt{PKD} algorithm because it is biased towards a single skill. It achieves a higher error but seems to benefit from deeper trees. Indeed, deep trees may be helpful in spotting more accurately the specialized worker targeted.
The results for \texttt{STACK} are very similar.
For all of these distributions, we can see that having a tree deeper than the number of dimensions leads to a significant loss in quality.

In Fig.~\ref{fig:qualite_1} (b), we analyze the variations of quality according to the value of $\epsilon$, with $10$ bins, a depth of $10$, and $10k$ workers.
In this case, the relative error seems to converge to a non-zero minimum when $\epsilon$ grows, probably due to inherent limits of KD-Tree's precision for tasks.

In Fig.~\ref{fig:qualite_1} (c), we fix the privacy budget to $\epsilon=0.1$, the depth of the tree to $10$ and $10k$ workers. We can see the impact of the number of
bins for each histogram used to compute a secure median. This value
does not greatly impact the relative error for the \texttt{UNIF} and \texttt{STACK} models, although we can see that performing with $1$ bin seems to give slightly less interesting results, as it looses its adaptability toward distributions.
For the \texttt{ONESPE} model, having only $1$ bin gives better results: indeed, the uniformity assumption within the bin implies that all dimensions are cut at $0.5$, which is also by construction the most important value to classify workers generated with this procedure.

In Fig.~\ref{fig:qualite_1} (d), we compare the quality according to the number of workers with $\epsilon=0.1$, $10$ bins and a depths of $10$. As the $\epsilon$ budget is the same, the noise is independent from this number, and thus, the quality increases with the number of workers.

We can notice that our results for the relative error are quite close to the state of the art results, such as the experiments from~\cite{psd}, which are performed on $2$-dimensional spaces only, with strong restrictions on the shapes of queries (tasks in our context) and in a centralized context.

\begin{figure}
  \centering
\begin{multicols}{2}[]
      \includegraphics[width=1.1\linewidth]{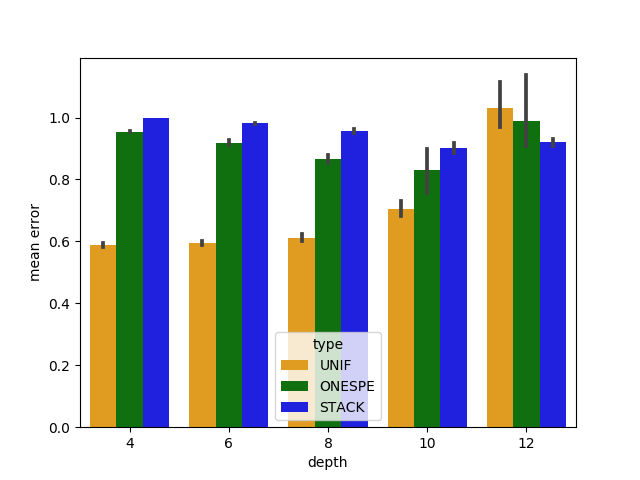}
      a) Variations according to the depth of the tree.\\
	  $10$ dimensions, $10k$ workers, $1k$ tasks, $\tau = 1$, $\epsilon = 0.1$, $10$ bins

      \includegraphics[width=1.1\linewidth]{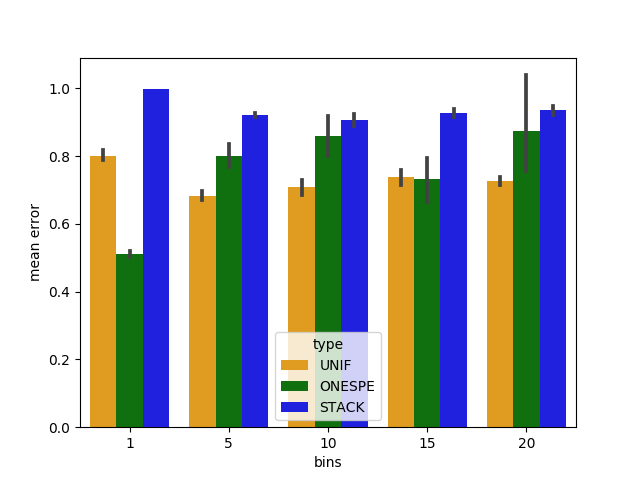}
      c) Variations according to the number of bins.\\
	  $10$ dimensions, $10k$ workers, $1k$ tasks, $\tau = 1$, $\epsilon = 0.1$, $depth = 10$

  \columnbreak
      \includegraphics[width=1.1\linewidth]{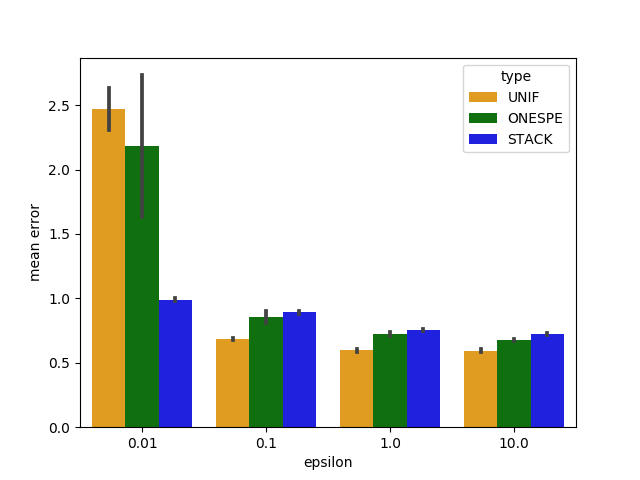}
      b) Variations according to $\epsilon$ privacy budget.\\
	  $10$ dimensions, $10k$ workers, $1k$ tasks, $\tau = 1$, $10$ bins, $depth = 10$

	  \includegraphics[width=1.1\linewidth]{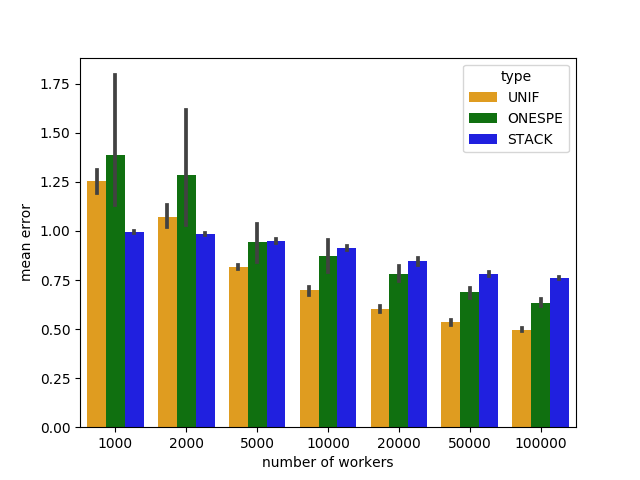}
      d) Variations according to the number of workers.\\
	  $10$ dimensions, $1k$ tasks, $\tau = 1$, $\epsilon = 0.1$, $10$ bins, $depth = 10$
\end{multicols}
\vspace{-0.5cm}

\caption{Mean relative error (see Definition~\ref{eq:quality}, the lower the better)}
\label{fig:qualite_1}
\vspace{-0.5cm}
\end{figure}

\paragraph{Computation time of the \texttt{PKD} algorithm.}

Our performance experiments were performed on a laptop running Linux
OS, equipped with $16GB$ of RAM and an Intel Core $i7-7600U$
processor. We measured the average computation time across $100$
experiments of each of the atomic operations used in the \texttt{PKD}
algorithm: encryption, partial decryption, and encrypted addition. The
results are summed up in Fig.~\ref{fig:crypto-comput}, with keys of
size $2048$ bits, using the University of Texas at Dallas
implementation for its accessibility%
\footnote{\url{http://cs.utdallas.edu/dspl/cgi-bin/pailliertoolbox/index.php?go=download}}. %
We use our cost analysis together with these atomic measures for
estimating the global cost of the \texttt{PKD} algorithm over large
populations of workers (see Equation~\ref{eq:w-send-total},
Equation~\ref{eq:w-send-avg}, and Equation~\ref{eq:pf-send} in
Section~\ref{sec:compl-analysis}).

\begin{figure}
	\centering
  \includegraphics[width=.8\linewidth]{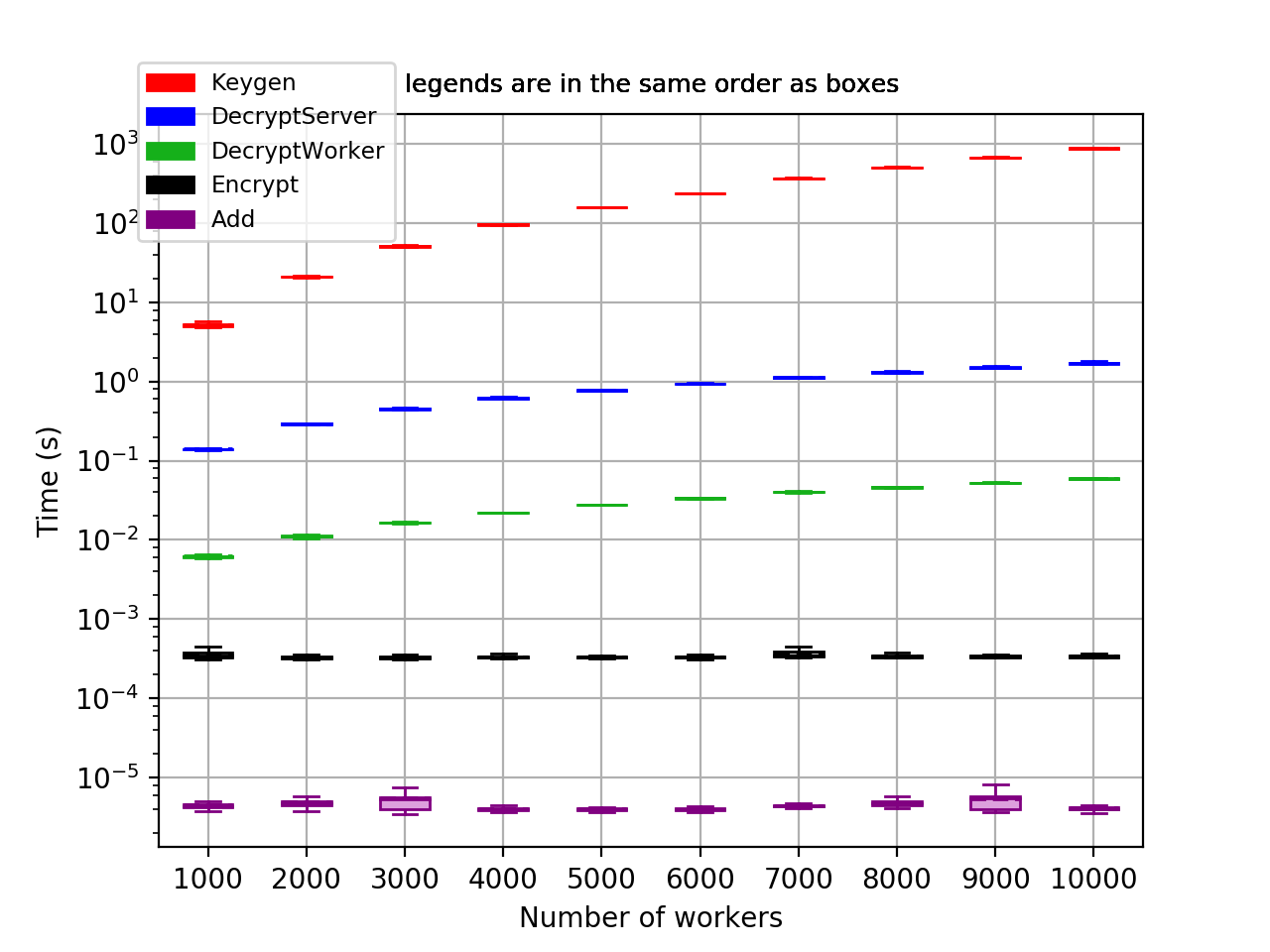}
  \vspace{-0.5cm}
  \caption{Computation time of homomorphically encrypted operations}
	\vspace{-0.5cm}
  \label{fig:crypto-comput}
\end{figure}

We can observe that the slowest operation is by far the generation of
the keys. However, since this operation is performed only once, the
cost of less than $1000$ seconds (about $17$ minutes) for $10k$
workers is very reasonable: this operation can be performed as soon as
there are enough subscriptions, and the keys may be distributed
whenever the workers connect. The other operations are faster
individually, but they are also performed more often. For $10k$
workers, $10$ workers required for decryption, a depth of the KD-Tree
of $10$ and $10$ bins, we can observe that: each worker will spend
less than $10$ seconds performing
encryptions, 
the platform will spend less than $1000$ seconds performing encrypted
additions, the average worker will spend less than $1$ second
performing decryptions, and the platform will spend less than $3000$
seconds performing decryptions.

Overall, these costs are quite light on the worker side: less than
$20$ seconds with commodity hardware. On the server side, the
computation is more expensive (about one hour), but we could expect a
server to run on a machine more powerful than the one we used in our
experiments. Additionally, it is worth to note that: (1) the
perturbed skills distribution is computed only once for a given a
population of workers and then used repeatedly, and (2) we do not have
any real time constraints so that the \texttt{PKD} algorithm can run
in background in an opportunistic manner.

\subsection{Assignment using packing}

\paragraph{Quality of our packing.}

We here propose to evaluate the quality of our partitioned packing approach.
Our experiments are performed with the same settings as those used to measure the quality of the \texttt{PKD} algorithm (see Section~\ref{sec:quality}).
To do so, we propose two main metrics.
First, me measure the mean precision for tasks, as defined in Definition~\ref{def:precision}.
Although this measure is useful to understand the overall improvement of our approach, it does not take into account the fact that downloads caused by PIR scale with the largest item.
Therefore, we introduce a second measure, the mean number of tasks that a worker would download. This value, that we call \emph{maximum tasks}, is computed as the maximum number of tasks that a leaf of the KD-tree intersects with: indeed, due to Condition~\ref{def:packing}.\ref{enum:packcondsize} (PIR requirement), all workers will download as many data as contained in the biggest bucket.

\begin{figure}
  \centering
  \vspace{-0.5cm}
\begin{multicols}{2}[]
      \includegraphics[width=1.1\linewidth]{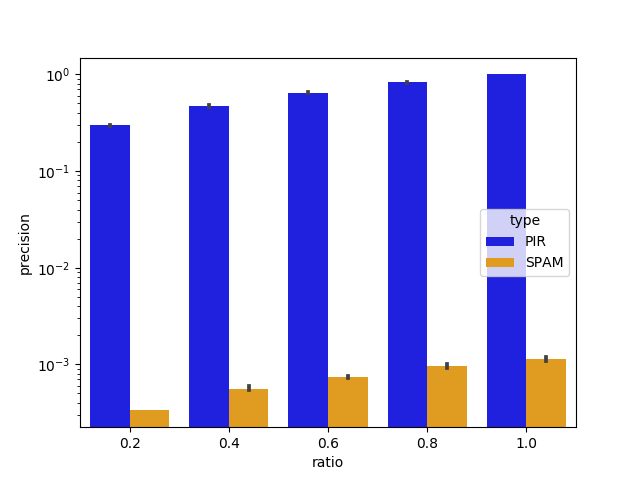}
	  a) Precision in log scale, according to the ratio of leaf taken by task for the \texttt{UNIF} model.\\
	  $10$ dimensions, $10k$ workers, $1k$ tasks, $\tau = 1$, $\epsilon = 0.1$, $10$ bins, $depth = 10$

  \columnbreak
      \includegraphics[width=1.1\linewidth]{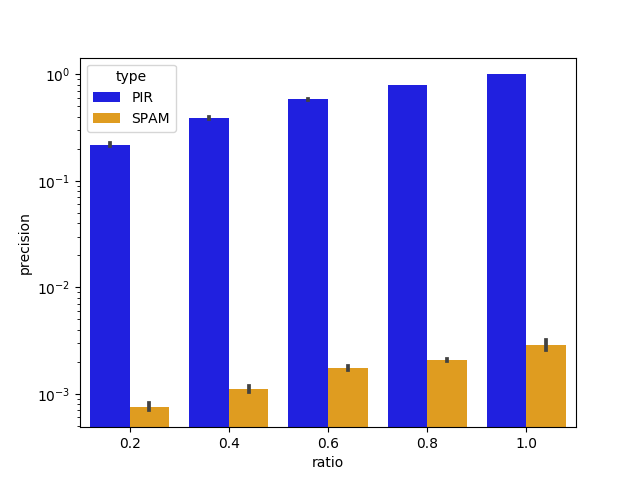}
	  b) Precision in log scale, according to the ratio of leaf taken by task for the \texttt{ONESPE} model.\\
	  $10$ dimensions, $10k$ workers, $1k$ tasks, $\tau = 1$, $\epsilon = 0.1$, $10$ bins, $depth = 10$
\end{multicols}
\vspace{-0.5cm}
\caption{Precision (the higher the better)}
	\vspace{-0.5cm}
\label{fig:precision_pourcent}
\end{figure}

In the task generation methods introduced previously, tasks are built independently from the KD-tree itself. This independence was logical to measure the quality of the \texttt{PKD} algorithm.
However, this very independence leads to poor results when it comes to building efficient packing on top of a KD-tree: as tasks are independent from the KD-tree, they have little restriction on how small they are (meaning that few workers will match with them, although all workers in leaf that intersect with it will download it), or on how many leaves they intersect with, leading to low precision, and high size of buckets.

Therefore, we introduce a new method to build tasks: \texttt{SUBVOLUME}. With this method, we build tasks as \emph{subleaves}, meaning that all tasks are strictly included within \emph{one} leaf of the KD-tree. Furthermore, we also enforce the size of the task as a parameter, such that the volume of the task is equal to a given ratio of the task.
More precisely, for a ratio $r \in [0,1]$, a space $E$ of $d$ dimensions and a picked leaf $l$, the interval of a task in a given dimension $d_i$ $l_{d_i} \times r^{1/d}$, where $l_{d_i}$ is the interval of the leaf in dimension $d_i$.
The \texttt{SUBVOLUME} model of tasks can easily be introduced by economic incentives from the platform, such as having requesters pay for each targeted leaf, which is likely to induce a maximization of the volume taken, and a reduction of the tasks that intersect with more than one leaf.
Note that we do not perform experiments with this generation of tasks on the \texttt{Stack} dataset, as most workers have their skills set to either $0$ or $1$, which leads to very unreliable results as tasks almost never encompass either of these values.

The comparison between the \texttt{PKD PIR Packing} heuristic and the spamming approach using this new method to generate tasks, presented in Figure~\ref{fig:precision_pourcent}, show that our approach improves precision by at least two orders of magnitude.
Also, note that for $r = 1$, the precision is equal to $1$ in the PIR approach. This result comes from the fact that, with $r = 1$, all workers within a leaf are targeted by all tasks that intersect with that leaf, meaning that they do not download irrelevant tasks.

\begin{figure}
  \centering

  \vspace{-1cm}
  \includegraphics[width=.7\linewidth]{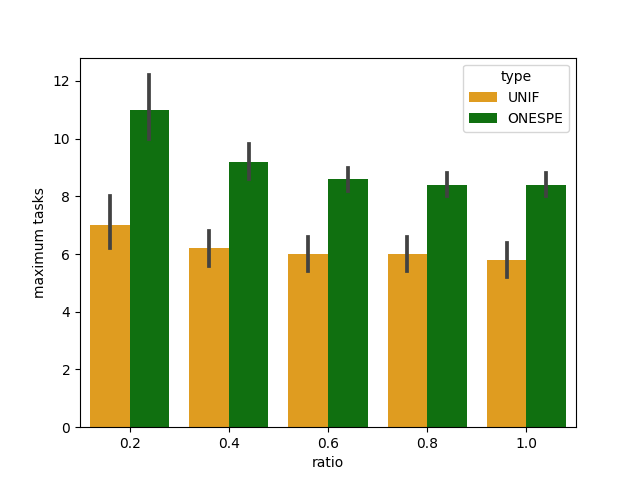}

\vspace{-0.4cm}
\caption{Number of tasks downloaded according to the ratio of leaf taken by task for the packing approach.
$10$ dimensions, $10k$ workers, $1k$ tasks, $\tau = 1$, $\epsilon = 0.1$, $10$ bins, $depth = 10$}
	\vspace{-0.5cm}
\label{fig:expe_maxleaf}
\end{figure}

The maximum number of tasks connected to a leaf, showed in Figure~\ref{fig:expe_maxleaf}, show that the cost of download is also significantly improved (these values are to be compared to $1000$, the total number of tasks that are downloaded with the spamming approach) also shows great improvement (around $2$ orders of magnitude), as tasks are more evenly spread within the leaves (there are $2^{10} = 1024$ leaves for a depth $10$ of the tree, which can explain this improvement).

\paragraph{Cost of the PIR protocol.}

We here study the impact of the number of files and of the size of files on the computation time.
In the experiments, we used a computer with $8GB$ of RAM, and a Ryzen $5$ $1700$ processor, using the implementation of~\cite{xpir}\footnote{\url{https://github.com/XPIR-team/XPIR}}.

As we can see in Figure~\ref{fig:pir} with keys of size $1024$ bits,
computation time is proportional to the overall size of the PIR
library (the coefficient of determination gives $r^2=0.9963$), and
that it grows at $0.14s/MB$ for a given request, as long as the
library can be stored in RAM\footnote{If it cannot, accesses to the secondary storage device are necessary. This would increase the runtime accordingly. However, since the library is scanned once per query, sequentially, the cost would remain linear in the size of the library.}.

\begin{figure}
	\centering
  \includegraphics[width=0.7\linewidth]{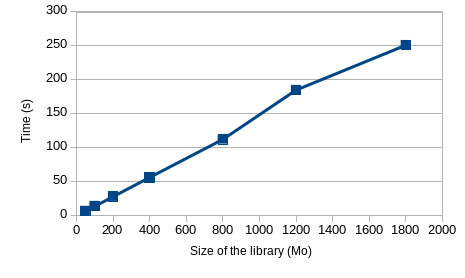}
  \vspace{-0.5cm}
  \caption{Computation time of the retrieval of a file according to the PIR library's size}
	\vspace{-0.5cm}
  \label{fig:pir}
\end{figure}

We now evaluate the maximum number of tasks $n_{max}$ that our system can take into account, according to two parameters: first, the time $t$ that workers accept to wait before the download begins, and second, the size $s$ that workers accept to download.
As $n_{max}$ does not solely depend on $t$ and $s$, we introduce a few other notations:
\begin{itemize}
	\item $f$ is the expansion factor of the encryption scheme.
	\item $|task|$ the mean size of a task.
	\item $k$ the proportion of tasks that are in the biggest leaf of the KD-tree (for instance, $k = 0.1$ means that the biggest leaf contains one tenth of all tasks)
	\item $depth$, the depth of the KD-tree (that is linked with the number of buckets)
\end{itemize}

In the spamming approach, the maximum number of tasks that can be managed by our system is independent from $t$ and can be simply computed as: $$n_{max, SPAM} = \frac{s}{|task|}$$
For the \texttt{PKD PIR Packing} heuristic, both $s$ and $t$ lead to a limitation on $n_{max,PIR}$.
We first consider the limit on the computation time $t$: according to our results in Figure~\ref{fig:pir}, the PIR library cannot be bigger than $\frac{t}{0.14}$, and the size of a bucket, can be computed as $k \times |task| \times n_{max,PIR}$ (by definition of $k$, as all buckets weight as much as the biggest one).
As the library can be computed as the product of the number of buckets and their size, this leads us to $2^{depth} \times k \times |task| \times n_{max,PIR} \leq \frac{t}{0.14}$, or equivalently $n_{max,PIR} \leq \frac{t}{0.14 \times 2^{depth} \times k \times |task|}$.
We now consider the limit $s$ on the size of download. For each worker, the size of a download will be the same, computed as the product of the expansion factor and the size of a bucket: $f \times k \times n_{max,PIR} \times |task| \leq s$.
This inequality leads to $n_{max,PIR} \leq \frac{s}{f \times |task| \times k}$.
By combining these two inequalities, $n_{max,PIR}$ takes its maximum value when
$$n_{max,PIR} = min(\frac{s}{f \times |task| \times k}, \frac{t}{2^{depth} \times 0.14 \times |task| \times k})$$

In Figure~\ref{fig:cost_pir}, we compare the number of tasks that a crowdsourcing platform can manage with different values of $t$ and $s$, using either the spamming approach or our \texttt{PKD PIR Packing} heuristic.
For the sake of simplicity, we consider that the expansion factor $f$ is $10$, although smaller values are reachable with XPIR protocol~\cite{xpir}. This factor will impact the amount of tasks that a worker can download.
We take $d=10$ similarly to our previous experiments.
We consider a mean size of task $|task| = 1MB$. It can be noticed that $|task|$ has no impact on the comparison ($\frac{max_{n,PIR}}{max_{n,SPAM}}$ does not depend on $|task|$).
For $k$, we consider two possible values: $k=0.01$, as suggested by the experiments in Figure~\ref{fig:expe_maxleaf}, and $k=\frac{1}{2^{10}}$, which represents the optimal case, where tasks are perfectly spread among buckets (for instance, due to strong incentives from the platform).

\begin{figure}
  \centering

\begin{multicols}{2}[]
      \includegraphics[width=1.1\linewidth]{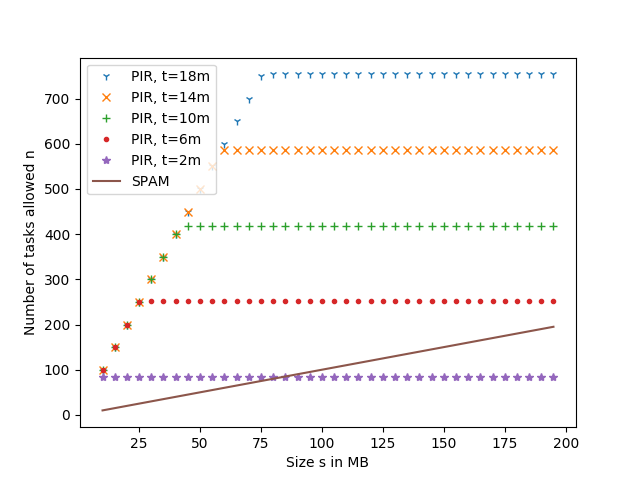}
	  a) Number of tasks $n$ manageable by our system according to the size $s$ a worker accepts to download. \\
	  $k=0.01$, $|task|=1MB$, $f=10$, $depth=10$

  \columnbreak
      \includegraphics[width=1.1\linewidth]{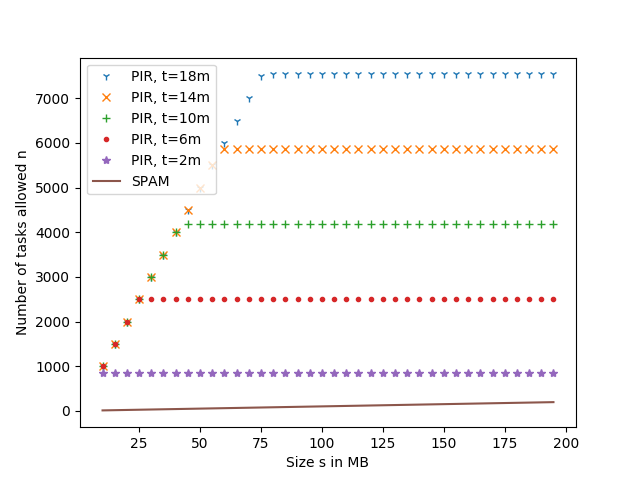}
	  b) Number of tasks $n$ manageable by our system according to the size $s$ a worker accepts to download. \\
	  $k=0.001$, $|task|=1MB$, $f=10$, $depth=10$
\end{multicols}
\vspace{-0.5cm}
\caption{Precision (the higher the better) ; curves are in the same order as the captions}
	\vspace{-0.5cm}
\label{fig:cost_pir}
\end{figure}

In these experiments, we can notice that our approach depends on both the computation time allowed and the size of the number of task in the largest bucket. In a real-life scenario, platforms would benefit from enforcing incentives to even the load between buckets.
However, if workers are willing to limit their download to less than $100MB$, the \texttt{PKD PIR Packing} heuristic outperforms the spamming approach as long as users are willing to limit their download even with relatively short computation times (less than $10$ minutes) by up to several orders of magnitude.
Our method is especially interesting in settings where the bandwidth is low (\emph{e.g.} with mobile devices), with low values of $s$.
On the opposite, it is interesting to highlight that high computation times are not necessarily prohibitive: as the computation is performed by the platform, a worker could very well ask for a bucket of tasks and download it later on when it is ready.

\section{Discussion}
\label{sec:discussion}
In this section, we propose a discussion on questions raised by our work that are not our primary focus.
More precisely, we elaborate our views on updates that our system may or may not allow (both for the \texttt{PKD} algorithm and the \texttt{PKD PIR Packing} heuristic), with some advantages and drawbacks.

\subsection{Updating tasks and PIR libraries}
\label{sec:reducing_costs}

In this work, we dealt with the download of tasks as a \emph{one-shot} download, meaning that a worker will download tasks once and for all.
However, in a real-life scenario tasks are likely to evolve (\emph{e.g.} new tasks will be added and old tasks will be outdated), and workers are equally likely to update their tasks.
Without further improvement, our design would require each worker to download a whole packing for each update of the available tasks.
However, more elaborate approaches are possible. Although it is not our focus to develop them exhaustively, we propose a few tracks that are likely to diminish the costs greatly.

For that purpose, we propose to divide time into fixed duration \emph{periods} (\emph{e.g.} a day, a week, etc.) and to
additionally take into account the period at which a task is issued in
order to pack it.
We give below two options for allowing updates.
Although their improvement have not been quantified nor validated experimentally,
These schemes aim at increasing the memory cost on the server in order to alleviate the overall computation required.


\subsubsection{Packing by Period}%
A simple scheme that allows easier updates while reducing the size of
single PIR request consists in designing packing not only according to
a specific partitioning but also according to time periods.
The platform builds one PIR library per period, \emph{i.e.,} considering
only the tasks received during that period.\footnote{In this kind of
  methods, a task can be either maintained into its starting period up till
  it's lifespan, or one can consider keeping up a limited number of
  periods (\emph{e.g.} all daily periods for the current month) and re-adding tasks on new periods packing each time they
  are deleted (\emph{e.g.} for tasks that are meant to be longer than a month).
  More elaborate or intermediate methods are also possible, but we will not explore this compromise in this paper.}.
  Workers simply need to perform PIR requests over the missing period(s) (one request per missing
period). As a result, the \texttt{PIR-get} function is executed on the
library of the requested period, which is smaller than or equal to the
initial library.

However, this scheme may result in high costs if the distribution of tasks is skewed.
For instance, let's consider two time periods $p_1$ and $p_2$, two subspaces of the space of skills $s_1$ and $s_2$, and three tasks $t_1$, $t_2$ and $t_3$ such that $t_1$ and $t_2$ appear only in $p_1$ and $s_1$, while $t_3$ appears only in $p_2$ and $s_2$.
In that case, all workers will download first the PIR item for period $p_1$, which is the same size as $w_{t_1}+w_{t_2}$ (due to padding for workers not in $p_1$) and then a second PIR item for $p_2$, of size $w_{t_3}$.
Without that period strategy, a worker who performs regular updates would have downloaded tasks $t_1$ and $t_2$ (or equivalent size) twice due to the update, and $t_3$ once, but a worker who would not have performed the intermediary download would have downloaded $max(w_{t_3}, w_{t_1}+w_{t_2})$.
Therefore workers who update frequently would benefit from this strategy, while workers who do not would have worse results.

%


\subsubsection{Personalized Packing by Period}
In order to tackle the previously mentioned issue caused by skewed
distribution of tasks, and to optimize the size of the downloaded
bucket for any frequency of downloads, we propose to adapt
the packing to the workers frequency of downloads.

Indeed, we observe that it is enough to perform as many packings as there are possible
time-lapses for workers, \emph{e.g.,} one packing for the last period,
one packing for the last two periods, one packing for the last three
periods, \emph{etc.}.
As a result, each \texttt{PIR-get} request is
associated with a time-lapse in order to let the PIR server compute
the buckets to be downloaded (or use pre-computed buckets).
With this method, we can get the best of both worlds with the previous example:
someone who downloads frequently will only have small updates, while someone who does not will not suffer from overcosts.

The main (and limited) drawback of this method is that the platform will have to store multiple PIR-libraries, which increases the storage required.

\subsubsection{Security of Packing by Period}

In both of the above schemes, we consider multiple downloads from workers. Even worse, in the second case the number of downloads may vary depending on workers habits.
If the above proposition were to be used, more accurate proofs of security would have to be done. Although it is not our focus to propose them in this article, we provide here some intuitions on their requirements.
In the first case, the number of downloads is the same for all workers, and would therefore not lead to great modifications of our proof.
In the second case however, the number of downloads depends on the frequency of downloads of workers. In order not to reveal information about worker's profiles, a new hypothesis is likely to be required, that states or implies that the frequency of downloads of workers is independent from their profiles.

\subsection{Updating \texttt{PKD}}

The \texttt{PKD} algorithm is not meant to allow users to update their profiles, as they would have to communicate information to do it, and this would either break our security policy, or exceed the $\epsilon$ privacy budget.
However, departures or arrivals are not inherently forbidden by our security policy.
A simple and naive way to upgrade the \texttt{PKD} algorithm to take new arrivals into account is to create multiple KD-trees, and to combine them.
For instance, one could imagine using the \texttt{PKD} algorithm on every new $k$ arrivals (\emph{e.g.} $k=1000$ or $k=10 000$). The estimation of workers within a subspace would be the sum of the estimations for each KD-tree, and a new PIR library could be built for each of these KD-trees.
For retrieval of workers, as it is impossible to know where the worker was, the most naive way to proceed is to retrieve a given value to each leaf of the approximated KD-tree, for instance $\frac{n_{leaf}}{n_{tree}}$, where $n_{leaf}$ is the approximated number of workers in the leaf, and $n_{tree}$ the total number of workers.
Once again, more elaborate methods are possible, but stand out of the focus of this paper.

\section{Related Work}
\label{sec:rel}%

\paragraph{Privacy-Preserving Task Assignment.}
Recent works have focused on the privacy-preserving task assignment
problem in general crowdsourcing. In~\cite{beziaud}, each worker
profile - a vector of bits - is perturbed locally by the corresponding
worker, based on a local differentially private bit flipping scheme. A
classical task-assignment algorithm can then be launched on the
perturbed profiles and the tasks.
An alternative approach to privacy-preserving task-assignment
has been proposed in~\cite{kajino2015phd}. It is based on the
extensive use of additively-homomorphic encryption, so it does not
suffer from any information loss, but this has a prohibitive cost in
terms of performance. Other works have focused on the specific context of spatial
crowdsourcing~\cite{to2018privacy,zhai2018towards,to2014framework}.
They essentially differ from the former in that spatial crowdsourcing
focuses on a small number of dimensions (typically, the two dimensions
of a geolocation) and is often incompatible with static worker
profiles. All these works explore solutions to ensure an assignment
between tasks and workers in a private way, and are complementary to
our approach.

\paragraph{Decentralized Privacy-Preserving Crowdsourcing Platform.}
ZebraLancer~\cite{zebralancer} is a decentralized crowdsourcing
platform based on blockchains, zero-knowledge proofs, and smart
contracts and focuses on the integrity of the reward policies and the
privacy of the submissions of workers against malicious workers or
requesters ({\em e.g.,\/} spammers, free-riders). Zebralancer does not
consider using worker profiles (neither primary nor secondary usages).


%

\paragraph{Privacy-Preserving KD-Trees.}
The creation and the publication of private KD-Trees has been studied
in depth in~\cite{psd}, but in our context, this work suffers from two
main defficiencies. First, it considers a trusted third party in
charge of performing all the computations while in our work we do not
assume any trusted third party. Second, it restricts the number of
dimensions to two, which is unrealistic in our high-skills
crowdsourcing context.  Enhancements to the technique have been
proposed, for example~\cite{qardaji2013differentially}, but without
tackling the trusted third party assumption. %
%

\paragraph{Privacy-Preserving COUNTs.}
Other differentially private count algorithms exist and use histograms. With the use of constrained inference, the approaches proposed {\em e.g.,\/} in \cite{Qardaji:2013:UHM:2556549.2556576,hay2010boosting} outperform standard methods. But they are limited to centralized contexts with a trusted third party, and only consider datasets with at most three dimensions. %
The {\tt PrivTree} approach~\cite{Zhang:2016:PDP:2882903.2882928} eliminates the need of fixing the height of trees beforehand, but their security model also considers a trusted third party, and their expriments are limited to four dimensions, which is lower than the number of skills that we consider. %
{\tt DPBench}~\cite{hay2016principled} benchmarks these methods in a centralized context and considers one or two dimensions. %
Finally, the authors of~\cite{kellaris2018engineering} tackle the efficiency issues of privacy-preserving hierarchies of histograms. It suffers from the same dimension and privacy limitations as the above works.

\paragraph{Task Design.}
To the best of our knowledge, the problem of designing a task
 according to the actual crowd while providing sound privacy
 guarantees has not been studied by related works. Most works focus on
 the complexity of the task~\cite{Keepitsimple}, on the interface with
 the
 worker~\cite{li2016crowdsourced,Keepitsimple,kucherbaev2015crowdsourcing},
 on the design of workflows~\cite{kulkarni2011turkomatic,kulkarni2012collaboratively}, or on the filters that may be embedded within tasks and
 based on which relevant workers should be selected~\cite{allahbakhsh2013quality}. However, these approaches
 ignore the relevance of tasks with respect to the actual crowd, and
 thus ignore the related privacy issues.

\section{Conclusion}%
\label{sec:conc}%
We have presented a privacy-preserving approach dedicated to enabling
various usages of worker profiles by the platform or by requesters,
including in particular the design of tasks according to the actual
distribution of skills of a population of workers. We have proposed
the {\tt PKD} algorithm, an algorithm resulting from rethinking the
{\em KD-tree\/} construction algorithm and combining
additively-homomorphic encryption with differentially-private
perturbation. No trusted centralized platform is needed: the {\tt
PKD} algorithm is distributed between workers and the platform. We
have provided formal security proofs and complexity analysis, and an
extensive experimental evaluation over synthetic and realistic data
that shows that the {\tt PKD} algorithm can be used even with a low
privacy budget and with a reasonable number of skills. Exciting future
works especially include considering stronger attack models
(\emph{e.g.,} covert or malicious adversaries), evaluating more precisely our propositions for updates, protecting the tasks in addition
to worker profiles, and guaranteeing the integrity of worker profiles.

\bibliographystyle{splncs04}
\bibliography{biblio}



\end{document}